\DeclareMathAlphabet{\mathpzc}{OT1}{pzc}{m}{it}
\begin{document}

\theoremstyle{plain}
\newtheorem{theorem}{Theorem}[section]
\newtheorem{lemma}[theorem]{Lemma}
\newtheorem{proposition}[theorem]{Proposition}
\newtheorem{corollary}[theorem]{Corollary}
\newtheorem{claim}[theorem]{Claim}
\newtheorem{definition}[theorem]{Definition}

\theoremstyle{definition}
\newtheorem{remark}[theorem]{Remark}
\newtheorem{note}[theorem]{Note}
\newtheorem{example}[theorem]{Example}
\newtheorem{assumption}[theorem]{Assumption}
\newtheorem*{notation}{Notation}
\newtheorem*{assuL}{Assumption ($\mathbb{L}$)}
\newtheorem*{assuAC}{Assumption ($\mathbb{AC}$)}
\newtheorem*{assuEM}{Assumption ($\mathbb{EM}$)}
\newtheorem*{assuES}{Assumption ($\mathbb{ES}$)}
\newtheorem*{assuM}{Assumption ($\mathbb{M}$)}
\newtheorem*{assuMM}{Assumption ($\mathbb{M}'$)}
\newtheorem*{assuL1}{Assumption ($\mathbb{L}1$)}
\newtheorem*{assuL2}{Assumption ($\mathbb{L}2$)}
\newtheorem*{assuL3}{Assumption ($\mathbb{L}3$)}

\newcommand{\Law}{\ensuremath{\mathop{\mathrm{Law}}}}
\newcommand{\loc}{{\mathrm{loc}}}
\newcommand{\Log}{\ensuremath{\mathop{\mathcal{L}\mathrm{og}}}}
\newcommand{\Meixner}{\ensuremath{\mathop{\mathrm{Meixner}}}}
\newtheorem*{asem}{Assumption ($\mathbb{EM}$)}
\def\EM{\ensuremath{(\mathbb{EM})}\xspace}

\newtheorem*{vol}{Assumption ($\mathbb{VOL}$)}

\let\SETMINUS\setminus
\renewcommand{\setminus}{\backslash}

\def\stackrelboth#1#2#3{\mathrel{\mathop{#2}\limits^{#1}_{#3}}}

\def\blue{\color{blue}}
\def\red{\color{red}}

\renewcommand{\theequation}{\thesection.\arabic{equation}}
\numberwithin{equation}{section}

\newcommand\llambda{{\mathchoice
      {\lambda\mkern-4.5mu{\raisebox{.4ex}{\scriptsize$\backslash$}}}
      {\lambda\mkern-4.83mu{\raisebox{.4ex}{\scriptsize$\backslash$}}}
      
{\lambda\mkern-4.5mu{\raisebox{.2ex}{\footnotesize$\scriptscriptstyle\backslash$
}}}
      
{\lambda\mkern-5.0mu{\raisebox{.2ex}{\tiny$\scriptscriptstyle\backslash$}}}}}

\newcommand{\prozess}[1][L]{{\ensuremath{#1=(#1_t)_{0\le t\le T}}}\xspace}
\newcommand{\prazess}[1][L]{{\ensuremath{#1=(#1_t)_{0\le t\le T^*}}}\xspace}

\newcommand{\de}{{\mathrm{d}}}
\newcommand{\De}{{\mathrm{D}}}
\newcommand{\im}{{\mathrm{i}}}
\newcommand{\indik}{{1}}
\newcommand{\D}{{\mathbb{D}}}
\newcommand{\E}{{\mathbb{E}}}
\newcommand{\N}{{\mathbb{N}}}
\newcommand{\Q}{{\mathbb{Q}}}
\renewcommand{\P}{{\mathbb{P}}}
\newcommand{\dd}{\operatorname{d}\!}
\newcommand{\ii}{\operatorname{i}\kern -0.8pt}
\newcommand{\Var}{\operatorname{Var }\,}
\newcommand{\dt}{\operatorname{d}\!t}   %\newcommand{\dt}{\mbox{d$t$}}
\newcommand{\ds}{\operatorname{d}\!s}   %{\mbox{d$s$}}
\newcommand{\dy}{\operatorname{d}\!y }    %{\mbox{d$y$}}
\newcommand{\du}{\operatorname{d}\!u}  %{\mbox{d$u$}
\newcommand{\dv}{\operatorname{d}\!v}   %{\mbox{d$v$}}
\newcommand{\dx}{\operatorname{d}\!x}   %{\mbox{d$x$}}
\newcommand{\dq}{\operatorname{d}\!q}   %{\mbox{d$q$}}
\newcommand{\fpoint}{\frac{\de}{\de t} f}
\newcommand{\gpoint}{\frac{\de}{\de t} g}
\newcommand{\varthetapoint}{\frac{\de }{\de t} \vartheta}
\newcommand{\gammapoint}{\frac{\de }{\de t} \gamma} 
\newcommand{\R}{\mathbb{R}}
\newcommand{\C}{\mathbb{C}}
\newcommand{\ind}{\mathbbmss{1}}

\def\EM{\ensuremath{(\mathbb{EM})}\xspace}
\def\mg{martingale\xspace}
\def\smmg{semimartingale\xspace}
\def\smmgs{semimartingales\xspace}

\def\lmm{LIBOR market model\xspace}
\def\fpm{forward price model\xspace}
\def\mfm{Markov-functional model\xspace}
\def\alm{affine LIBOR model\xspace}
\def\lmms{LIBOR market models\xspace}
\def\fpms{forward price models\xspace}
\def\mfms{Markov-functional models\xspace}
\def\alms{affine LIBOR models\xspace}

\def\inK{\ensuremath{{k\in\mathcal{K}}}}
\def\inKn{\ensuremath{{k\in\mathcal{K}\setminus\{N\}}}}

\def\SL{\ensuremath{\mathbb{SL}}\xspace}

\def\I{\mathcal{I}}

\def\e{\mathrm{e}}
\def\lib{LIBOR\xspace}
\def\lev{L\'evy\xspace}
\def\loc{\mathrm{loc}}

\newcommand{\la}{\langle}
\newcommand{\ra}{\rangle}

\newcommand{\Norml}[1]{%
{|}\kern-.25ex{|}\kern-.25ex{|}#1{|}\kern-.25ex{|}\kern-.25ex{|}}

\newcommand{\Lip}{$\mathit{Lip}$}%{{\sc Lip}}%{$\SL .1$}
\newcommand{\Int}{$\mathit{Int}$}
\newcommand{\Drift}{$\mathit{Drift}$}

\DeclarePairedDelimiter{\abs}{\lvert}{\rvert} 
\DeclarePairedDelimiter{\Abs}{\Big\lvert}{\Big\rvert} 
\DeclarePairedDelimiter{\norm}{\lVert}{\rVert}

\renewcommand{\1}{\mathds{1}}
\def\R{\ensuremath{\mathbb{R}}}

%% ========================================================================== %%
%% ========================================================================== %%

\title[Multiple Curve L\'evy Forward Price Model]{Multiple Curve L\'evy Forward Price Model allowing for negative interest rates}

\author[E. Eberlein]{Ernst Eberlein}
\author[C. Gerhart]{Christoph Gerhart}
\author[Z. Grbac]{Zorana Grbac}

\address{Department of Mathematical Stochastics, University of Freiburg, Ernst-Zermelo-Stra\ss e 1, 79104 Freiburg, Germany}
\email{eberlein@stochastik.uni-freiburg.de }

\address{Department of Mathematical Stochastics, University of Freiburg, Ernst-Zermelo-Stra\ss e 1, 79104 Freiburg, Germany}
\email{christoph.gerhart@finance.uni-freiburg.de}

\address{Laboratoire de Probabilités, Statistique et Modélisation, 
	 Universit{\'e} Paris Diderot, 75205 Paris Cedex 13, France}
\email{grbac@math.univ-paris-diderot.fr}

\keywords{}
%\subjclass[]{}
\thanks{Financial support by a Europlace Institute of Finance grant is gratefully acknowledged.}

\date{\today}
\maketitle\frenchspacing\pagestyle{myheadings}

\begin{abstract}
In this paper we develop a framework for discretely compounding interest rates which is based on the forward price process approach. This approach has a number of advantages, in particular in the current market environment. Compared to the classical as well as the L\'evy Libor market model, it allows in a natural way for negative interest rates and has superb calibration properties even in the presence of extremely low rates. Moreover, the measure changes along the tenor structure are simplified significantly. 
These properties make it an excellent base for a post-crisis multiple curve setup. Two variants for multiple curve constructions are discussed. Time-inhomogeneous L\'evy processes are used as driving processes. An explicit formula for the valuation of caps is derived using Fourier transform techniques. Based on the valuation formula, we calibrate the two model variants to market data. 
\end{abstract}

\vspace{1cm}

Traditionally the spreads between Euribor and EONIA OIS rates were in the order of magnitude of a few basis points and therefore from the point of view of modeling could be considered to be negligible. This changed definitively with the 2007-2009 financial crisis. The beginning of the crisis can easily be dated by looking at the dynamics of these spreads for different tenors (see Figure \ref{EUR_OIS_spread}). Its graph looks like a fever chart. Depending on the specific tenors, the spreads jumped to values between 40 and 70 basis points in early August 2007. The fever chart reached its peak in mid-September 2008 with the collapse of Lehman Brothers where values beyond 200 basis points were reached. 

The market had realized that there is substantial risk where it had not been recognized before. The mechanism of choice of the Euribor panel banks is such that one could assume that these banks are essentially risk-free. With the crisis it became clear that these banks are prone to liquidity and credit risk as well and consequently this risk must be priced correctly. This is directly reflected in the spreads. The presence of these spreads forces the financial industry to revise the classical single curve fixed-income models and consider multiple curve approaches.  

In this paper we develop such a model on the basis of the forward price process (for short \textit{forward process}) in the spirit of the L\'evy forward process framework introduced in \citet{EberleinOezkan05}. Figure \ref{FRA_curves} shows the historical evolution of FRA rates starting in 2005 (subfigure (a)) through 2007, 2009, 2012, 2014 up to 2016 (subfigure (f)). These curves are obtained via bootstrapping from market quotes of deposits, forward rate agreements and swaps (for technical details see \cite{GerhartLuetkebohmert18}). Deposit rates are needed for the short end, forward rate agreements for short and mid maturities, whereas the mid and long maturity part of the term structure is derived from swap quotes. As one can clearly see, in 2007 the risky tenor-dependent curves (three months and six months) started to depart from the basic discount curve. Figures (c)-(f) show a significant spread between the basic and the three-month curve and the three-month and the six-month curve. In addition we emphasize that all three term structures are negative up to some maturity in 2016. Therefore the model to be developed has to be able to cope with negative rates and tenor-dependent term structures. The curves as shown in Figure \ref{FRA_curves} represent the starting values for the model.

As in the Libor market model (LMM) in the following the interest rates are constructed via backward induction along a discrete tenor structure. The forward process approach is chosen because it is analytically, as well as numerically, superior to the LMM, which is  the industry standard as a model for discretely compounded interest rates. From the analytical point of view, the main advantage is that a more tractable measure change technique applies, which preserves the structure of the driving process and consequently allows to avoid any approximation such as the frozen drift assumption. From the economic perspective, in view of the current market environment it has to be underlined that the forward process approach allows in a natural way for negative interest rates. Since the basic quantity which is modelled is a scaled and shifted interest rate, this approach is similar in spirit to the shifted LMM, which has become the industry response to the current situation of low and negative rates. Whereas in the shifted LMM an arbitrary choice or statistical estimation of the lower boundary for negative values is required, here the range of negative rates arises from the definition of the forward process as a positive quantity (see equation (\ref{forward_discount_d}) below).
 
Another important property is related to calibration. The increments of the driving process translate directly into increments of the interest rates, which allows for superb calibration results. This is not the case for the LMM, where the increments of the driving process are scaled by the current level of the rates (for explicit expressions see the introduction of \citet*{EberleinEddahbiLalaoui16}). In particular, in a market with extremely low rates this creates serious problems. Huge movements of the driving process are needed even for small changes of the rates (exploding volatility in calibration). As far as the use of Lévy processes is concerned we mention also that - compared to models driven by a Brownian motion - these processes are flexible enough to generate the empirically observable levels of correlations between rates with different maturities. This aspect has been intensively studied in \cite{BeinhoferEberlein2011}. Let us emphasize again that the forward process approach has in addition the advantage of a smooth measure change along the tenor structure.  

Among the more recent papers which deal with multiple curve modelling we mention \cite{CuchieroFontanaGnoatto16, CuchieroFontanaGnoatto17}. In the first one an approach based on multiplicative spreads with semimartingales as driving processes is developed whereas in the second one the authors study an affine model setup. Some constraints on the state space of the driving process are needed in the latter one in order to ensure the desired behavior of the multiple curve spreads such as positivity and monotonicity of the tenor-dependent curves.  The Lévy forward price framework instead allows for full flexibility in choosing the driving process. Another multiple curve affine LIBOR model with a focus on the positivity of spreads is developed in \cite{Grbac15affine}. Furthermore, we mention \cite{CrepeyGrbacNguyen12} where a single risky rate is considered in addition to the basic one in a HJM-framework. In \cite{EberleinGerhart17} a fully fledged model with an arbitrary number of tenor-dependent curves is developed. In addition this model considers multiple curves in the context of a two-price economy and therefore allows to exploit bid and ask quotes.

For an exhaustive literature overview on multiple curve models we mention the two monographs by \cite{Henrard14} and \cite{GrbacRunggaldier15} and the article collection by \cite{BianchettiMorini13}. The idea of multiplicative spreads was introduced in \cite{Henrard10} and further used in \cite{CuchieroFontanaGnoatto16, CuchieroFontanaGnoatto17}, with spreads defined on continuous tenor structures in the context of a short rate (spot spreads) and a Heath-Jarrow-Morton framework. 

In this paper the discretely compounded multiplicative forward spreads are modelled on a discrete tenor structure extending in a suitable manner the forward process approach.  The paper is organized as follows. In Section 1 we introduce the driving process and discuss its main properties. The model is presented in Section 2. Section 3 deals with interest rate option pricing. The last section is dedicated to the implementation and calibration of the model.

\begin{figure}
\hfill
\subfigure[]{\includegraphics[width=5cm]{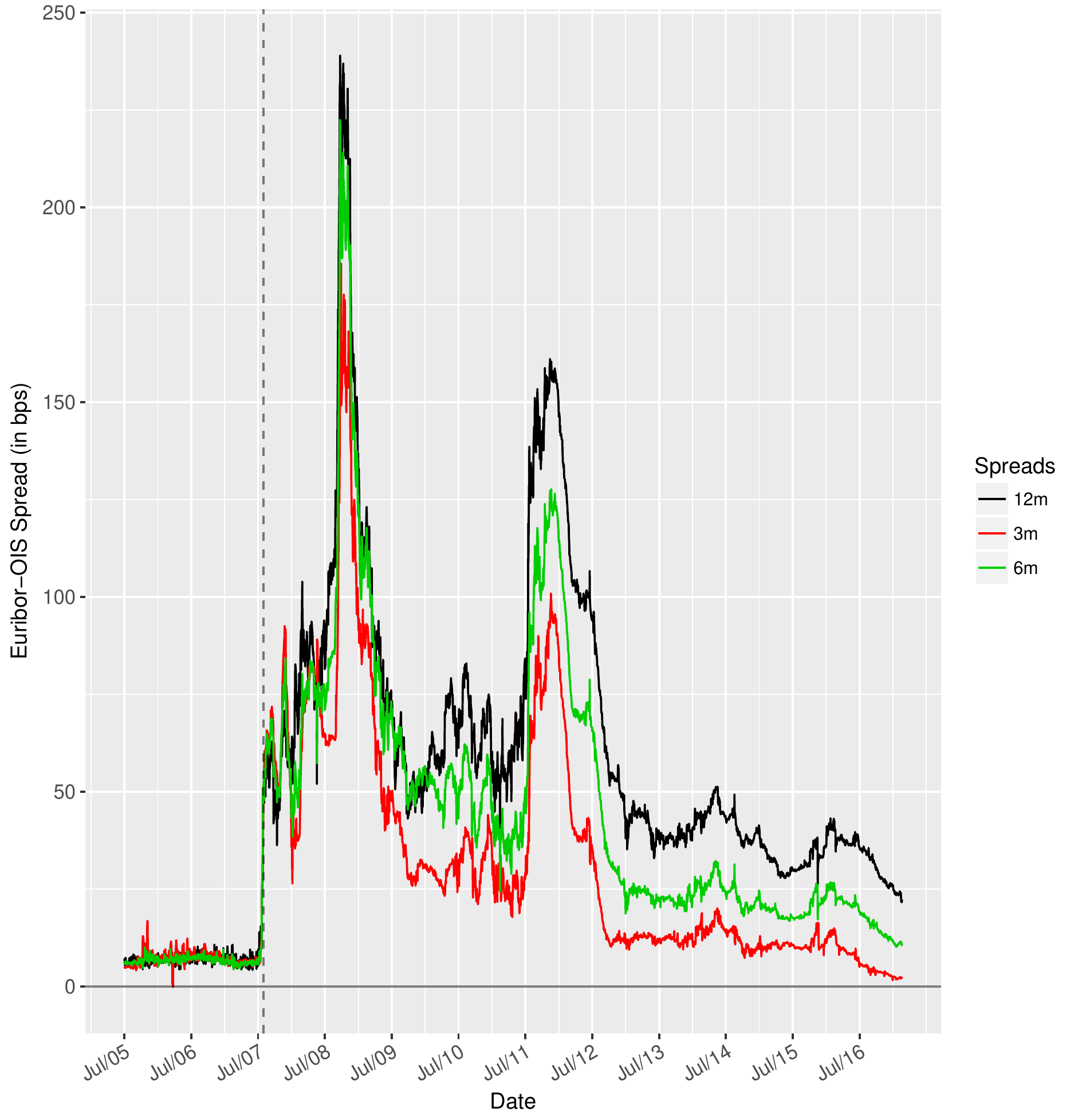}}
\hfill
\subfigure[]{\includegraphics[width=5cm]{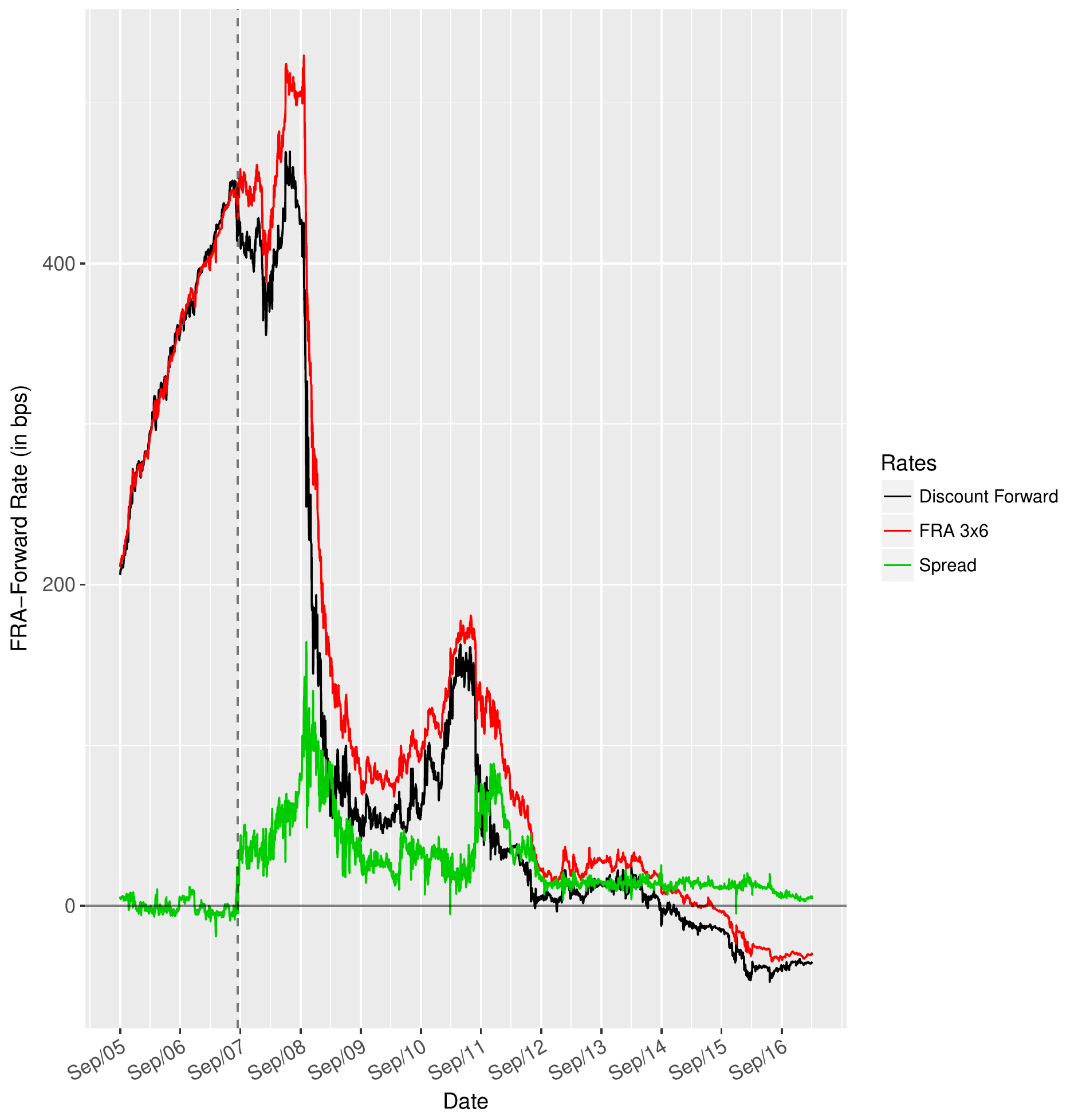}}
\hfill
\caption{Left panel: Evolution of the spreads between EURIBOR and EONIA OIS rates for different tenors. Right panel: Divergence of FRA rate and forward rate.}
\label{EUR_OIS_spread}
\end{figure}

\begin{figure}
\subfigure[]{\includegraphics[width=5.5cm]{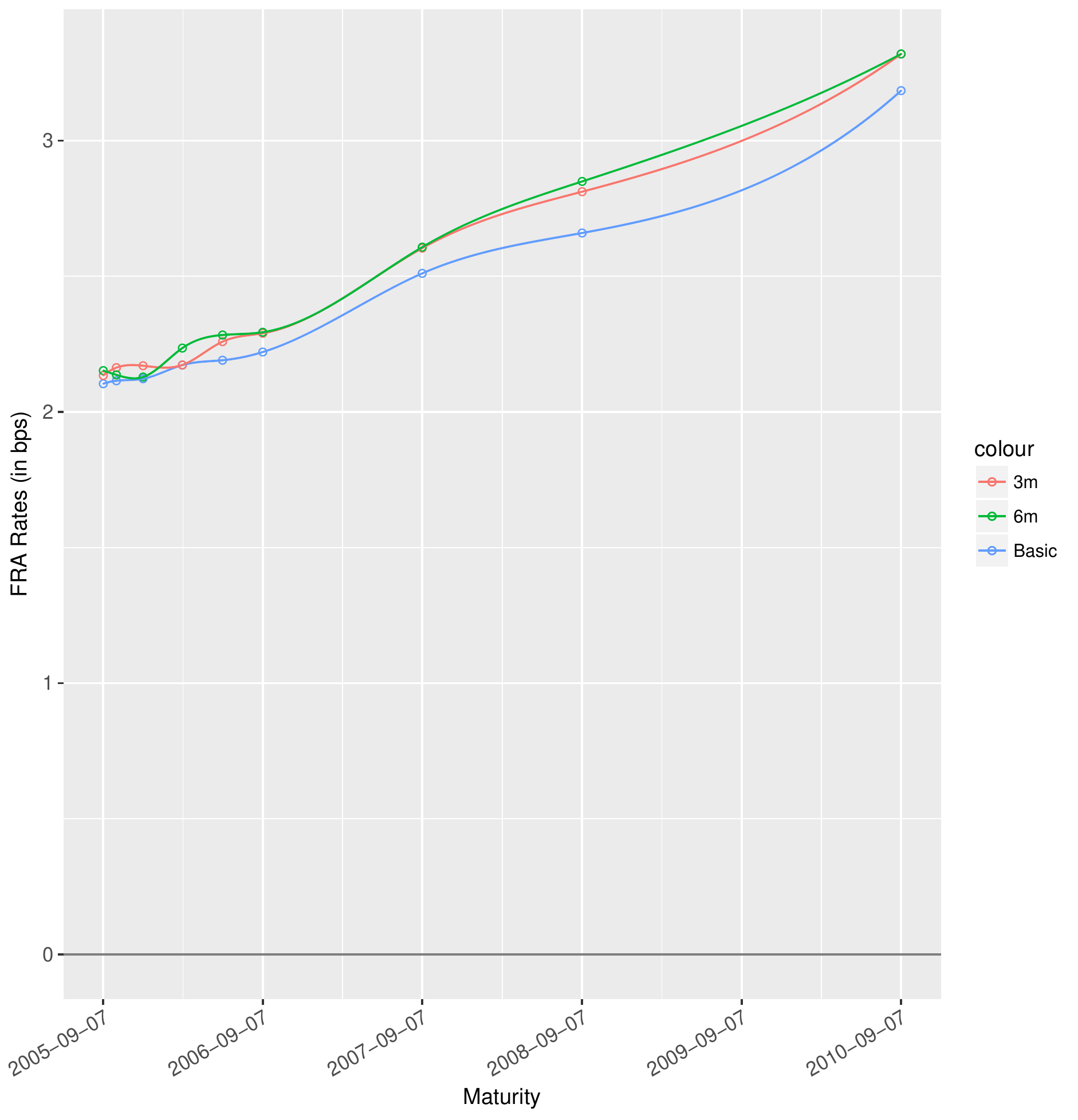}}
\hfill
\subfigure[]{\includegraphics[width=5.5cm]{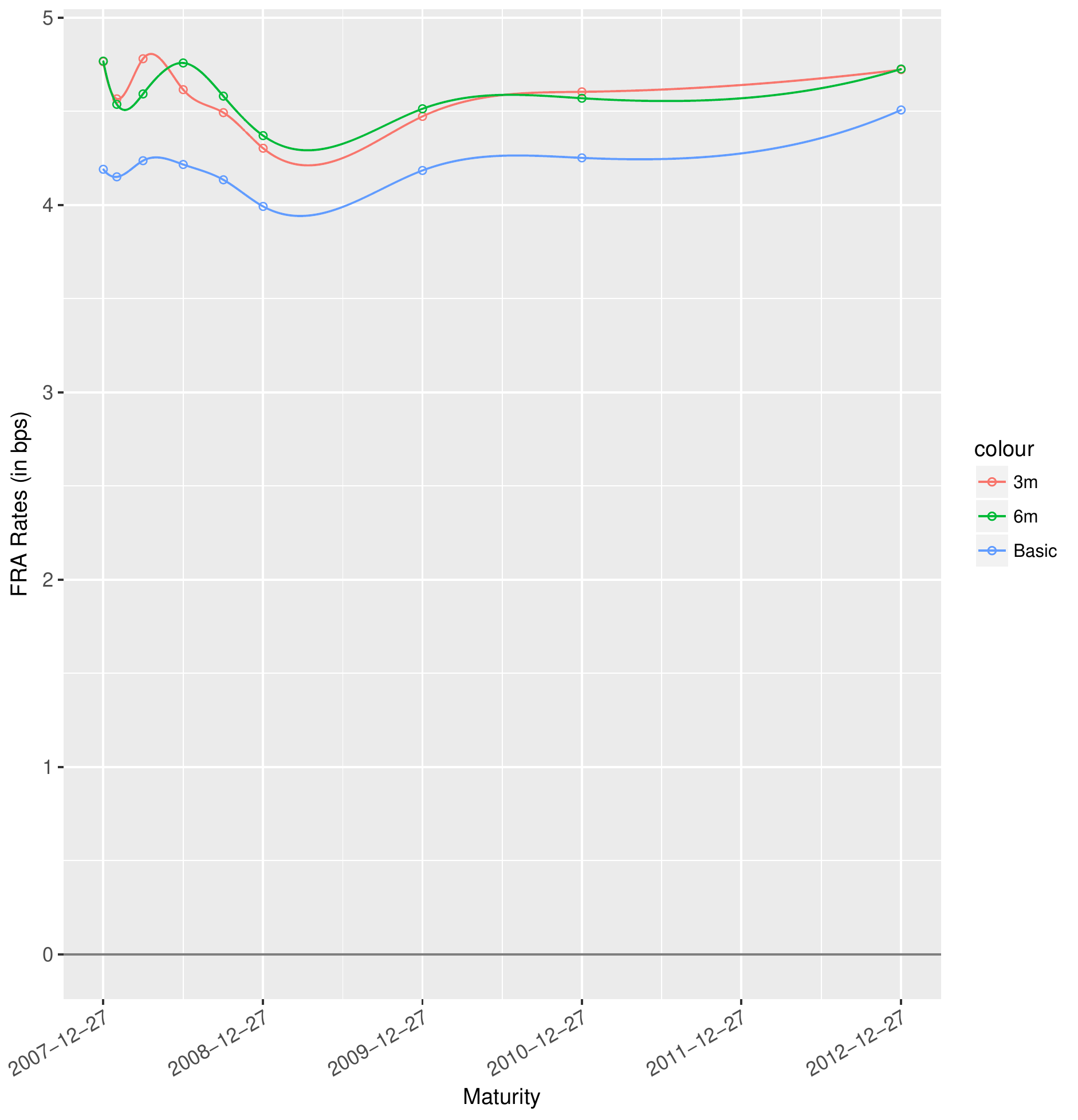}}
\hfill
\subfigure[]{\includegraphics[width=5.5cm]{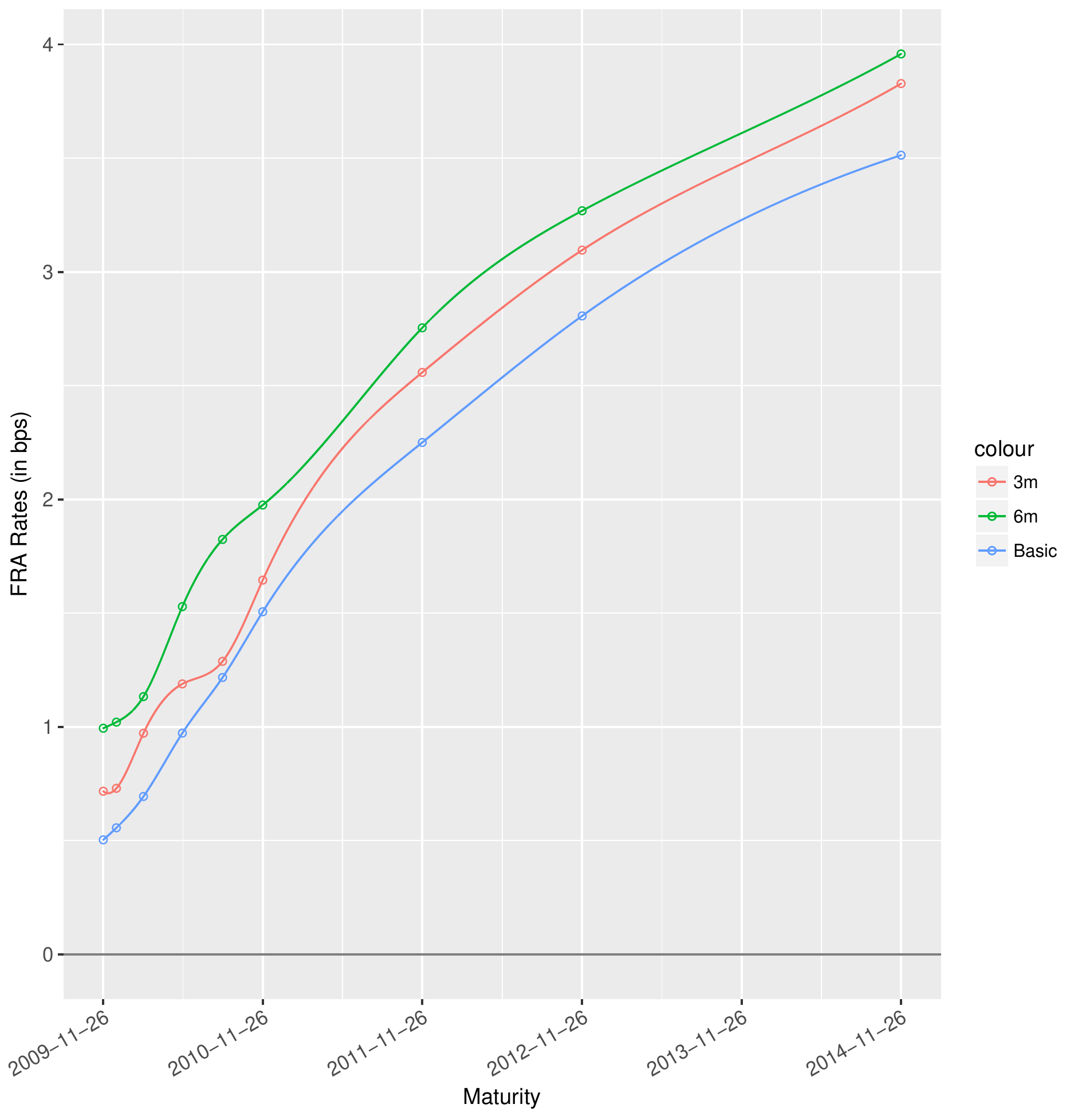}}
\hfill
\subfigure[]{\includegraphics[width=5.5cm]{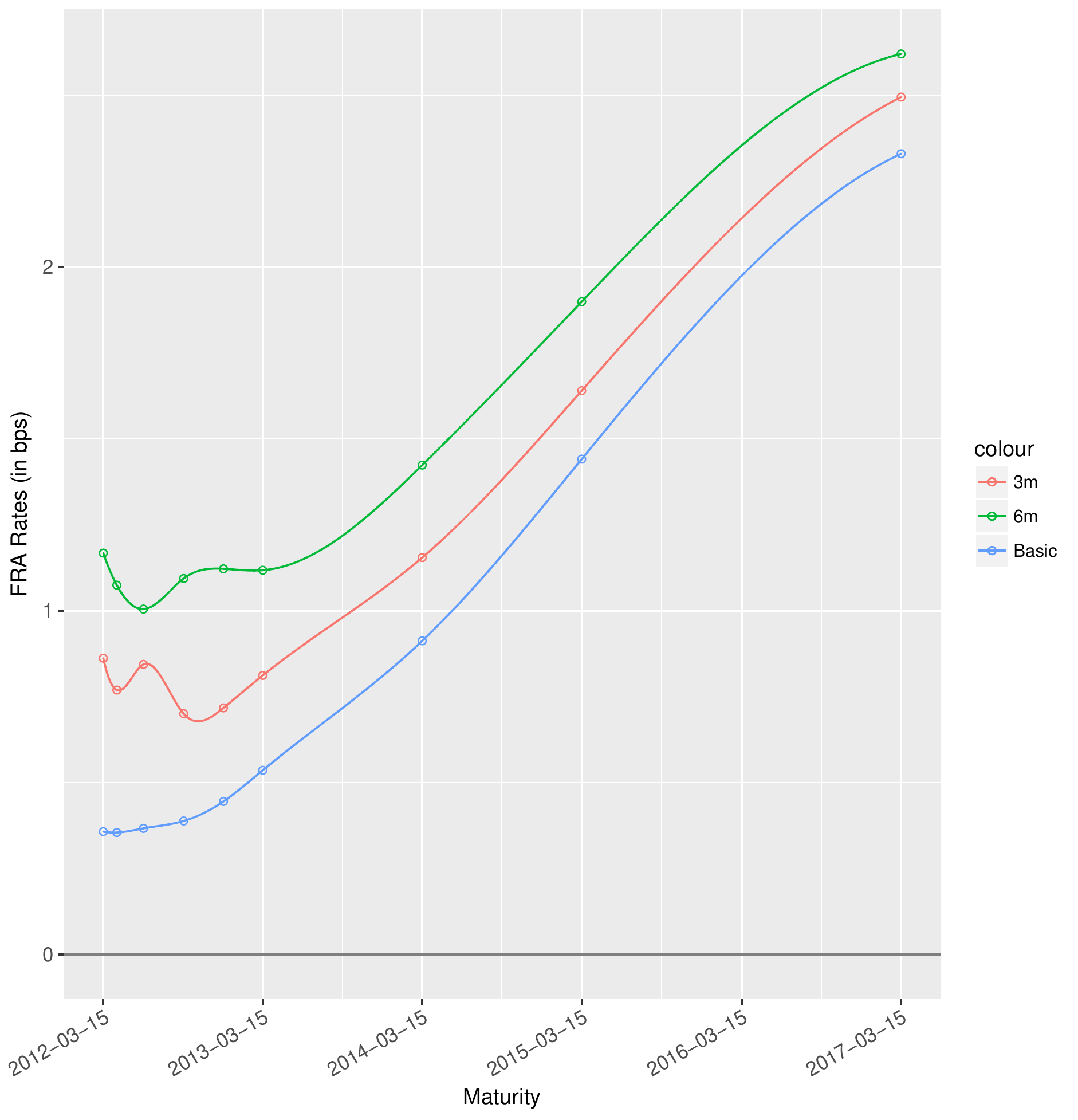}}
\hfill
\subfigure[]{\includegraphics[width=5.5cm]{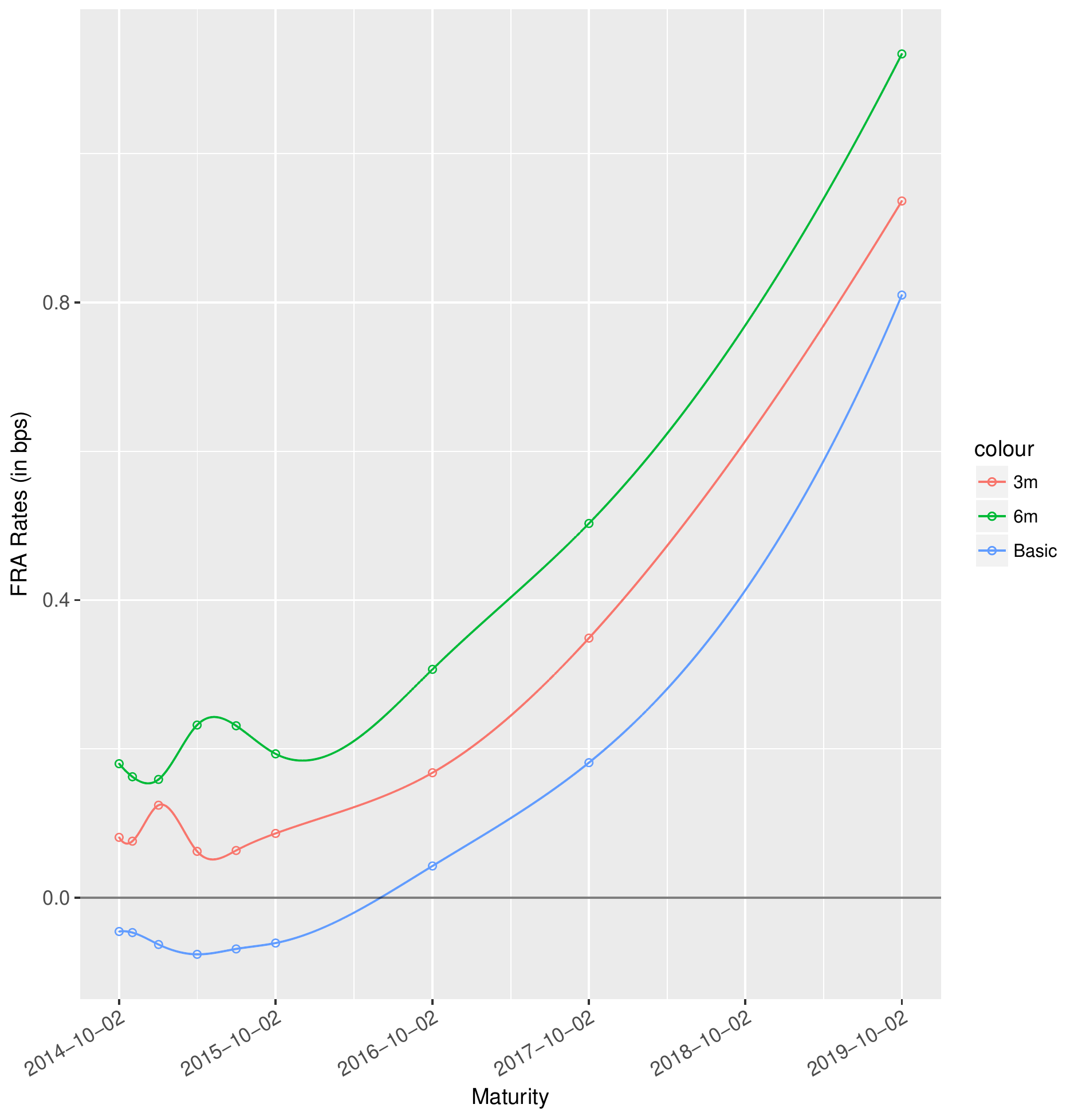}}
\hfill
\subfigure[]{\includegraphics[width=5.5cm]{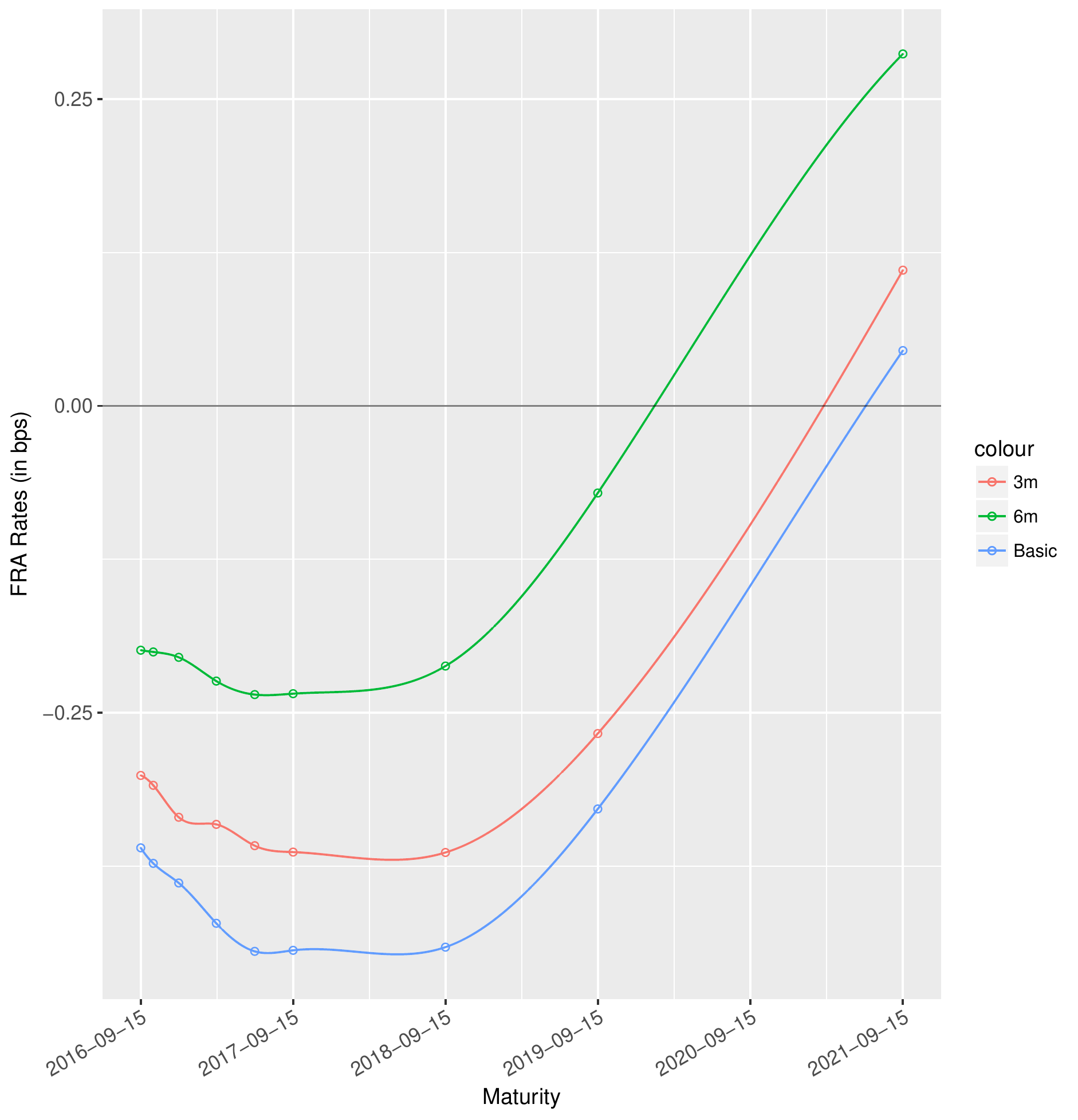}}
\hfill
 \caption{Historical evolution of the bootstrapped tenor-dependent FRA curves from market data.}
\label{FRA_curves}
\end{figure}

\section{The driving process}

Let $T^* \in \R_+ \coloneqq [0,\infty)$ be a finite time horizon and $\mathscr{B} \coloneqq (\Omega,\mathscr{G},\mathbb{F}=(\mathscr{F}_t)_{t \in [0,T^*]},P)$ a stochastic basis that satisfies the usual conditions in the sense of \citet[][Definition I.1.2 and Definition I.1.3]{JacodShiryaev03}. As driving process, we consider a $d$-dimensional time-inhomogeneous Lévy process $L=(L^1,\dots,L^d)$ on $\mathscr{B}$ with $L^i=(L_t^i)_{t \in [0,T^*]}$ for every $i \in \{1,\dots,d\}$. This means that $L$ is an $\mathbb{F}$-adapted process with independent increments and absolutely continuous characteristics (abbreviated as PIIAC, see \citet{JacodShiryaev03}). This type of stochastic processes is also known as additive processes (see \citet[][]{Sato99}). We emphasise that $L$ is a $d$-dimensional semimartingale.

Without loss of generality we can assume that the paths of each component of $L$ are càdlàg. We also postulate that each component $L^i$ starts in zero. The law of $L_t$ is determined by its characteristic function
\begin{align} \label{characteristic_function_L_t}
\E[e^{i \langle u, L_t \rangle}]= \exp & \Bigg( \int_0^t \Big[  i \langle u, b_s(h) \rangle - \frac{1}{2} \langle u, c_s u\rangle \nonumber \\
& +  \int_{\R^d} \left( e^{i \langle u, x \rangle} - 1 - i \langle u, h(x) \rangle \right) F_s(dx)  \Big] ds \Bigg) \quad (u \in \R^d).
\end{align} 
Here, $h$ is a truncation function, where usually one takes $h(x)=x \cdot \mathds{1}_{\{\abs{x} \leq 1\}}$, $b_s(h) = (b_s^1(h),\dots,b_s^d(h)): [0,T^*] \rightarrow \R^d$, $c_s=(c_s^{i j})_{ i,j \leq d}: [0,T^*] \rightarrow \R^{d \times d}$, a symmetric nonnegative-definite $d \times d$-matrix and $F_s$ is a Lévy measure for every $s \in [0,T^*]$, i.e. a nonnegative measure on $(\R^d,\mathcal{B}(\R^d))$ that integrates $(\abs{x}^2 \wedge 1)$ and satisfies $F_s(\{0\})=0$. We denote by $\langle \cdot, \cdot\rangle$ the Euclidean scalar product on $\R^d$ and $\abs{\cdot}$ is the corresponding norm. The scalar product on $\R^d$ is extended to complex numbers by setting $\langle w, z \rangle \coloneqq \sum_{j=1}^d w_j z_j$ for every $w, z \in \C^d$. Thus, $\langle \cdot, \cdot \rangle$ is not the Hermitian scalar product here. We further assume that 
\begin{align*}
\int_0^{T^*} \Big[ \abs{b_s(h)} + \norm{c_s} + \int_{\R^d}(\abs{x}^2 \wedge 1)F_s(dx) \Big] ds < \infty,
\end{align*}
where $\norm{\cdot}$ denotes any norm on the set of $d \times d$-matrices. The triplet $(b,c,F) = (b_s,c_s,F_s)_{s \in [0,T^*]}$ represents the local characteristics of $L$. We also make the following standing assumption on the exponential moments.
\begin{asem} \label{EM}
There exist constants $M, \varepsilon >0$ such that
$$
\int_{0}^{T^{*}} \int_{|x| >1} \exp \la u, x\ra F_{t} (\dx) \dt < \infty,
$$
for every $u \in [-(1+\varepsilon)M, (1+\varepsilon)M]^{d}$. In
particular, we assume without loss of generality that $\int_{|x| >1}
\exp \la u, x\ra F_{t} (\dx) < \infty $, for \emph{all} $t \in [0,
T^{*}]$.
\end{asem}

Assumption $(\mathbb{EM})$ is equivalent to $\E[\exp \langle u , L_t \rangle] < \infty$ for all $t \in [0,T^*]$ and $u \in [-(1+\epsilon)M,(1+\epsilon)M]^d$. We will consider interest rate models with underlying processes that are exponentials of stochastic integrals with respect to $L$. These  underlying processes have to be martingales under the risk-neutral measure. Therefore, a priori they have to have finite expectations which is exactly guaranteed by assumption $(\mathbb{EM})$. An immediate consequence of $(\mathbb{EM})$ is that the random variable $L_t$ has finite expectation. Therefore, the representation (\ref{characteristic_function_L_t}) simplifies and can be written as
\begin{align} \label{characteristic_function_L_tb}
\E[e^{i \langle u, L_t \rangle}]= \exp \Bigg( \int_0^t & \Big[ i \langle u, b_s \rangle - \frac{1}{2} \langle u, c_s u\rangle \nonumber \\
& +  \int_{\R^d} \left( e^{i \langle u, x \rangle} - 1 - i \langle u, x \rangle \right) F_s(dx) \Big] ds \Bigg).
\end{align} 

We emphasise that the characteristic $b$ is now different from the one in (\ref{characteristic_function_L_t}). We will always work with the local characteristics $(b,c,F)$ that appear in (\ref{characteristic_function_L_tb}). Another implication of assumption $(\mathbb{EM})$ is that the process $L$ is a special semimartingale. Thus, its canonical representation is given by the simple form
\begin{align} \label{canonical_representation_L_t}
L_t = \int_0^t b_s ds + \int_0^t \sqrt{c_s} dW_s + \int_0^t \int_{\R^d}x(\mu^L -\nu)(ds,dx)
\end{align}
(see \citet[][Corollary II.2.38]{JacodShiryaev03}), where $W=(W_t)_{t \in [0,T^*]}$ is a standard $d$-dimensional Brownian motion, $\sqrt{c_s}$ is a measurable version of the square root of $c_s$, and $\mu^L$ is the random measure of jumps of $L$ with compensator $\nu(ds,dx)=F_s(dx)ds$. Obviously, the integrals in (\ref{canonical_representation_L_t}) should be understood componentwise. We stress that assumption $(\mathbb{EM})$ is valid for all processes of interest in implementing the model. In particular $(\mathbb{EM})$ holds  for processes that are generated by generalised hyperbolic distributions. The (extended) cumulant process associated with the process $L$ under the probability measure $P$ is denoted by $\theta_s$ and given by
\begin{align*}
\theta_s(z)=\langle z, b_s \rangle + \frac{1}{2} \langle z, c_s z \rangle + \int_{\R^d} \left( e^{\langle z, x \rangle} - 1 - \langle z, x \rangle \right) F_s(dx)
\end{align*}
for every $z \in \C^d$ where this function is defined which requires that $\mathsf{Re}(z) \in \left[- (1+\epsilon)M, (1+\epsilon)M \right]^d$. A detailed analysis of the cumulant process for semimartingales is given by \citet{KallsenShiryaev02}. Note that if $L$ is a (homogeneous) Lévy process, i.e. if the increments of $L$ are stationary, the triplet $(b_s, c_s, F_s)$ and thus also $\theta_s$ do not depend on $s$. In this case, we write $\theta$ for short. It then equals the cumulant (also called log moment generating function) of $L_1$.

\section{Multiple curve Lévy forward price model}

\subsection{Basic curve} \label{basic_curve}

Let $\mathscr{T} \coloneqq \{T_0,\dots,T_n\}$ denote an arbitrary discrete tenor structure with $n \in \N$ and $0 \leq T_0 < T_1 < \dots < T_n=T^*$. For any $k \in \{1,\dots,n\}$ we set $\delta^k \coloneqq \delta(T_{k-1},T_k)$  to be the year fraction between dates $T_{k-1}$ and $T_k$ according to a specified day count convention \cite[see][section I.1.2]{BrigoMercurio06}. We assume that the tenor structure is equidistant and we may therefore specify $\delta \coloneqq \delta^k$ for every $k \in \{1,\dots,n\}$. Thus, the considered discrete tenor structure $\mathscr{T}$ is unambiguously related to tenor $\delta$.

We denote by $B_t^\mathsf{d}(T)$ the price at time $t$ of a bond maturing at $T$. For each pair of consecutive dates $T_{k-1}, T_k \in \mathscr{T}$ with $k \in \{1,\dots,n\}$, we define the discretely compounded forward reference rate at time $t \leq T_{k-1}$ by
\begin{align*}
L^\mathsf{d}(t,T_{k-1},T_k) \coloneqq \frac{1}{\delta}\left( \frac{B_t^\mathsf{d}(T_{k-1})}{B_t^\mathsf{d}(T_k)} - 1 \right)
\end{align*}
and the forward price corresponding to this reference rate is specified as
\begin{align*}
F^\mathsf{d}(t,T_{k-1},T_k) \coloneqq \frac{B_t^\mathsf{d}(T_{k-1})}{B_t^\mathsf{d}(T_k)}.
\end{align*}
Obviously, we have the relation
\begin{align} \label{forward_discount_d}
F^\mathsf{d}(t,T_{k-1},T_k) = 1 + \delta L^\mathsf{d}(t,T_{k-1},T_k).
\end{align}

The interest rate curve corresponding to the reference rate $L^\mathsf{d}$ will be referred to as basic or discount curve. Hereafter, the aim is to develop a tractable model for the forward price process $F^\mathsf{d}(\cdot,T_{k-1},T_k)$. Note that modelling the forward price processes means specifying the dynamics of ratios of successive bond prices. 

Let $L^{T^*}=(L^{1,T^*},\dots,L^{d,T^*})$ be a time-inhomogeneous Lévy process defined on the stochastic basis $(\Omega,\mathscr{F}_{T^*},\mathbb{F}=(\mathscr{F}_t)_{t \in [0,T^*]},P_{T^*}^\mathsf{d})$ with local characteristics $(0,c_t,F_t^{T^*})$ and satisfying the exponential moment condition $(\mathbb{EM})$. This process will be used as driving process of the model. We interpret the probability measure $P_{T^*}^\mathsf{d}$ as the forward martingale measure associated with the basic curve and settlement date $T^*$. The following two ingredients are needed to develop the model for the basic curve.
\begin{enumerate}
\item[($\mathbb{DFP}.1$)] The initial term structure of bond prices $B_0^\mathsf{d}$ defined by
\begin{align*}
B_0^\mathsf{d}:
\begin{cases}
[0,T^*] \rightarrow (0,\infty) \\
T \mapsto B_0^\mathsf{d}(T)
\end{cases}
\end{align*}
is given.
\end{enumerate}
The bootstrapping method considered by \citet{AmetranoBianchetti13} can be used to construct the initial term structure. One typically takes the quotes of OIS rates to derive this curve. The starting values of the forward processes are then obtained by the relation
\begin{align} \label{discount_forward_process}
F^\mathsf{d}(0,T_{k-1},T_k)=\frac{B_0^\mathsf{d}(T_{k-1})}{B_0^\mathsf{d}(T_k)}, \quad \text{for every}~ k \in \{1,\dots,n\}.
\end{align}

\begin{enumerate}
\item[($\mathbb{DFP}.2$)] For any maturity $T_{k-1} \in \mathscr{T}$ with $k \in \{1,\dots,n\}$ there is a bounded, continuous and deterministic function $\lambda^\mathsf{d}(\cdot,T_{k-1})$ given by
\begin{align*}
\lambda^\mathsf{d}(\cdot,T_{k-1}):
\begin{cases}
[0,T^*] \rightarrow \R_+^d \\
t \mapsto \lambda^\mathsf{d}(t,T_{k-1})=(\lambda^{\mathsf{d},1}(t,T_{k-1}),\dots,\lambda^{\mathsf{d},d}(t,T_{k-1}))
\end{cases}
\end{align*}
which represents the volatility of the forward process $F^\mathsf{d}(\cdot,T_{k-1},T_k)$. We require that
\begin{align*}
\sum_{k=1}^n \lambda^{\mathsf{d},j}(t,T_{k-1}) \leq M, \quad \text{for all}~ t \in [0,T^*]~\text{and}~j \in \{1,\dots,d\},
\end{align*}
where $M$ is the constant from assumption $(\mathbb{EM})$, and we set $\lambda^\mathsf{d}(t,T_{k-1})=(0,\dots,0)$ for $t > T_{k-1}$.
\end{enumerate}

To construct the forward price processes corresponding to the basic curve we proceed by backward induction as in \citet{EberleinOezkan05}. To this end we start with the most distant tenor period $[T_{n-1},T_n]$ and, for any $t \leq T_{n-1}$, we postulate that
\begin{align} \label{discounting_forward_process_oexp}
\frac{F^\mathsf{d}(t,T_{n-1},T_n)}{F^\mathsf{d}(0,T_{n-1},T_n)} = \exp\left( \int_0^t \lambda^\mathsf{d}(s,T_{n-1})dL_s^{T_n} + \int_0^t b^\mathsf{d}(s,T_{n-1},T_n)ds \right)
\end{align}
where the time-inhomogeneous Lévy process $L^{T_n}=L^{T^*}$ can be written in terms of its canonical representation
\begin{align*}
L_t^{T^*} = \int_0^t \sqrt{c_s}dW_s^{T^*} + \int_0^t \int_{\R^d} x (\mu^L - \nu^{T^*})(ds,dx)
\end{align*}
with $d$-dimensional $P_{T^*}^\mathsf{d}$-standard Brownian motion $W^{T^*} \coloneqq (W_t^{T^*})_{t \in [0,T^*]}$ and integer-valued random measure $\mu^L$ associated to the jumps of $L^{T^*}$ having $P_{T^*}^\mathsf{d}$-compensator $\nu^{T^*}(dt,dx) \coloneqq F_t^{T^*}(dx)dt$.  

Now the drift term $b^\mathsf{d}(\cdot,T_{n-1},T_n)$ is specified in such a way that the forward price process $F^\mathsf{d}(\cdot,T_{n-1},T_n)$ becomes a $P_{T^*}^\mathsf{d}$-(local) martingale. This is achieved by setting
\begin{align} \label{drift_specification}
b^\mathsf{d}(t,T_{n-1},T_n) = & - \frac{1}{2} \langle \lambda^\mathsf{d}(t,T_{n-1}), c_t \lambda^\mathsf{d}(t,T_{n-1})^\mathsf{T} \rangle \nonumber \\
& - \int_{\R^d} \left(e^{\langle \lambda^\mathsf{d}(t,T_{n-1}), x \rangle} - 1 - \langle \lambda^\mathsf{d}(t,T_{n-1}), x \rangle \right) F_t^{T^*}(dx).
\end{align}

Applying \citet[][Theorem II.8.10]{JacodShiryaev03}, one can express the forward price process presented by an ordinary exponential (\ref{discounting_forward_process_oexp}) as a stochastic exponential
\begin{align} \label{discounting_forward_process_sexp}
F^\mathsf{d}(t,T_{n-1},T_n) = F^\mathsf{d}(0,T_{n-1},T_n) \mathscr{E}_t(H^\mathsf{d}(\cdot,T_{n-1},T_n)),
\end{align}
where the process $H^\mathsf{d}(\cdot,T_{n-1},T_n)$ given by the stochastic logarithm
\begin{align*}
H^\mathsf{d}(\cdot,T_{n-1},T_n)=\mathscr{L}\left(\exp\left(\int_0^t \lambda^\mathsf{d}(s,T_{n-1})dL_s^{T_n} + \int_0^t b^\mathsf{d}(s,T_{n-1},T_n)ds \right) \right)
\end{align*}
is of the form
\begin{align} \label{form_sexp}
H^\mathsf{d}(t,T_{n-1},T_n) = & \int_0^t \lambda^\mathsf{d}(s,T_{n-1}) \sqrt{c_s} dW_s^{T^*} \nonumber \\
& + \int_0^t \int_{\R^d} \left( e^{\langle \lambda^\mathsf{d}(s,T_{n-1}),x \rangle} - 1 \right) (\mu^L - \nu^{T^*})(ds,dx).
\end{align}
Observe that by \citet[][Theorem I.4.61]{JacodShiryaev03} and \citet*{EberleinJacodRaible05} the forward price process as given in (\ref{discounting_forward_process_sexp}) is a true $P_{T^*}^\mathsf{d}$- martingale. 

Hence, we can specify the forward martingale measure associated with date $T_{n-1}$ defined on $(\Omega,\mathscr{F}_{T_{n-1}})$ and denoted by $P_{T_{n-1}}^\mathsf{d}$ by setting
\begin{align*}
\frac{dP_{T_{n-1}}^\mathsf{d}}{dP_{T_n}^\mathsf{d}} \coloneqq \frac{F^\mathsf{d}(T_{n-1},T_{n-1},T_n)}{F^\mathsf{d}(0,T_{n-1},T_n)} = \mathscr{E}_{T_{n-1}}(H^\mathsf{d}(\cdot,T_{n-1},T_n)). 
\end{align*}
Restricted to the $\sigma$-field $\mathscr{F}_t$ with $t \leq T_{n-1}$, we have
\begin{align*}
\frac{dP_{T_{n-1}}^\mathsf{d}\big|_{\mathscr{F}_t}}{dP_{T_n}^\mathsf{d}\big|_{\mathscr{F}_t}} = \frac{F^\mathsf{d}(t,T_{n-1},T_n)}{F^\mathsf{d}(0,T_{n-1},T_n)} = \mathscr{E}_t(H^\mathsf{d}(\cdot,T_{n-1},T_n)). 
\end{align*}

Applying Girsanov's theorem for semimartingales (see \citet[][Theorem III.3.24]{JacodShiryaev03}) we can identify the predictable processes $\beta$ and $Y$ that describe the change of measure from equation (\ref{form_sexp}). One obtains
\begin{align*}
\beta(t)=\lambda^\mathsf{d}(t,T_{n-1}) \quad \text{and} \quad Y(t,x)=\exp\left(\langle \lambda^\mathsf{d}(t,T_{n-1}), x \rangle \right).
\end{align*}
In particular, these processes determine the characteristics of the semimartingale $L^{T^*}$ relative to $P_{T_{n-1}}^\mathsf{d}$ from its semimartingale characteristics relative to $P_{T^*}^\mathsf{d}$.
Since the characteristics remain deterministic we conclude that $L^{T^*}$ remains a process with independent increments after the measure change. 

According to Girsanov's theorem the process $W^{T_{n-1}} \coloneqq (W_t^{T_{n-1}})_{t \in [0,T^*]}$ defined by
\begin{align*}
W_t^{T_{n-1}} \coloneqq W_t^{T^*} - \int_0^t \sqrt{c_s}\lambda^\mathsf{d}(s,T_{n-1})^\mathsf{T}ds
\end{align*}
is a $d$-dimensional standard Brownian motion under $P_{T_{n-1}}^\mathsf{d}$ and furthermore
\begin{align*}
\nu^{T_{n-1}}(dt,dx) \coloneqq \exp\left(\langle \lambda^\mathsf{d}(t,T_{n-1}), x \rangle \right) \nu^{T^*}(dt,dx) = F_t^{T_{n-1}}(dx)dt 
\end{align*}
defines the $P_{T_{n-1}}^\mathsf{d}$-compensator of $\mu^L$, where we set 
\begin{align*}
F_t^{T_{n-1}}(dx) \coloneqq \exp\left(\langle \lambda^\mathsf{d}(t,T_{n-1}), x \rangle \right) F_t^{T^*}(dx).
\end{align*}

Proceeding backwards along the discrete tenor structure $\mathscr{T}$, we get, for any $k \in \{1,\dots,n\}$ and $0 \leq t \leq T_{k-1}$, all forward price processes in the form
\begin{align*} 
\frac{F^\mathsf{d}(t,T_{k-1},T_k)}{F^\mathsf{d}(0,T_{k-1},T_k)} = \exp\Big( & \int_0^t \lambda^\mathsf{d}(s,T_{k-1})dL_s^{T_k} + \int_0^t b^\mathsf{d}(s,T_{k-1},T_k)ds \Big),
\end{align*}
where the process $L^{T_k}=(L_t^{T_k})_{t \in [0,T^*]}$ is given by
\begin{align} \label{original_canonical_representation}
L_t^{T_k} = \int_0^t \sqrt{c_s}dW_s^{T_k} + \int_0^t \int_{\R^d} x (\mu^L - \nu^{T_k})(ds,dx) 
\end{align}
and the drift $b^\mathsf{d}(\cdot,T_{k-1},T_k)$ is of the form
\begin{align*}
b^\mathsf{d}(t,T_{k-1},T_k) = & - \frac{1}{2} \langle \lambda^\mathsf{d}(t,T_{k-1}), c_t \lambda^\mathsf{d}(t,T_{k-1})^\mathsf{T} \rangle \\
& - \int_{\R^d} \left(e^{\langle \lambda^\mathsf{d}(t,T_{k-1}), x\rangle} - 1 - \langle \lambda^\mathsf{d}(t,T_{k-1}), x \rangle \right) F_t^{T_k}(dx).
\end{align*} 
The drift is specified in such a way that $F^\mathsf{d}(\cdot,T_{k-1},T_k)$ is a $P_{T_k}^\mathsf{d}$-martingale where in the respective previous step we have defined the forward measure $P_{T_k}^\mathsf{d}$ on $(\Omega,\mathscr{F}_{T^*},(\mathscr{F}_t)_{t \in [0,T_k]})$ by the density
\begin{align*}
\frac{dP_{T_k}^\mathsf{d}}{dP_{T_{k+1}}^\mathsf{d}}\Big|_{\mathscr{F}_t} \coloneqq \frac{dP_{T_k}^\mathsf{d}\big|_{\mathscr{F}_t}}{dP_{T_{k+1}}^\mathsf{d}\big|_{\mathscr{F}_t}} \coloneqq \frac{F^\mathsf{d}(t,T_k,T_{k+1})}{F^\mathsf{d}(0,T_k,T_{k+1})}
\end{align*}
which is related to $P_{T_l}^\mathsf{d}$ with $l \in \{k+1,\dots,n\}$ by
\begin{align*}
\frac{dP_{T_k}^\mathsf{d}}{dP_{T_l}^\mathsf{d}}\Big|_{\mathscr{F}_t} = \prod_{j=k}^{l-1} \frac{F^\mathsf{d}(t,T_j,T_{j+1})}{F^\mathsf{d}(0,T_j,T_{j+1})}=\frac{B_0^\mathsf{d}(T_l)}{B_0^\mathsf{d}(T_k)} \prod_{j=k}^{l-1} F^\mathsf{d}(t,T_j,T_{j+1}).
\end{align*}
Moreover, we get the relations
\begin{align*}
W_t^{T_k} \coloneqq W_t^{T^*} - \int_0^t \sqrt{c_s} \sum_{j=k}^{n-1} \lambda^\mathsf{d}(s,T_j)^\mathsf{T}ds
\end{align*}
and for the $P_{T_k}^\mathsf{d}$-compensator $\nu^{T_k}$ of $\mu^L$
\begin{align*}
\nu^{T_k}(dt,dx) \coloneqq & \exp\left(\sum_{j=k}^{n-1} \langle \lambda^\mathsf{d}(t,T_j), x \rangle \right) \nu^{T^*}(dt,dx) = F_t^{T_k}(dx)dt,
\end{align*}
where we have set $F_t^{T_k}(dx) \coloneqq \exp\left(\sum_{j=k}^{n-1} \langle \lambda^\mathsf{d}(t,T_j), x \rangle \right) F_t^{T^*}(dx)$. Note that the processes $L^{T_k}$ are time-inhomogeneous Lévy processes with local characteristics $(0,c_t,F_t^{T_k})$ under $P_{T_k}^\mathsf{d}$ that fulfil $(\mathbb{EM})$ due to assumption ($\mathbb{DFP}.2$). Formula (\ref{original_canonical_representation}) gives the canonical representation with respect to $P_{T_k}^\mathsf{d}$. 

To end this subsection, we derive for any $T,S \in \mathscr{T}=\{T_0,\dots,T_n\}$ with $T \leq S$, a general representation for the relationship between the driving processes $L^T$ and $L^S$. Let us denote
\begin{align} \label{definition_J_TS}
\mathscr{J}_T^S \coloneqq \{h \in \N|~T_h \in \mathscr{T} ~ \text{and}~ T < T_h \leq S\}
\end{align}
and define
\begin{align*}
w(s,T,S) \coloneqq & - c_s \sum_{h \in \mathscr{J}_T^S} \lambda^\mathsf{d}(s,T_{h-1})^\mathsf{T} \\
& + \int_{\R^d} x \Big[ \exp\Big(-\sum_{h \in \mathscr{J}_T^S} \langle \lambda^\mathsf{d}(s,T_{h-1}), x \rangle\Big) - 1 \Big] F_s^T(dx)
\end{align*}
for any $s \leq T$. Note that $w(s,T,S) \equiv 0$ when $T=S$. By an application of Girsanov's theorem for semimartingales, we obtain
\begin{align} \label{canonical_representation}
L^T = & - \int_0^{\cdot} c_s \sum_{h \in \mathscr{J}_T^S}\lambda^\mathsf{d}(s,T_{h-1})^{\mathsf{T}} ds \nonumber \\
& + \int_0^{\cdot} \int_{\R^d} x \Big[ \exp\Big(-\sum_{h \in \mathscr{J}_T^S} \langle \lambda^\mathsf{d}(s,T_{h-1}), x \rangle\Big) - 1 \Big] F_s^T(dx) ds \nonumber \\
& + \int_0^{\cdot} \sqrt{c_s}dW_s^S + \int_0^{\cdot} \int_{\R^d} x (\mu^L - \nu^S)(ds,dx) \nonumber \\
= & \int_0^{\cdot} w(s,T,S)ds + L^S. 
\end{align}

We emphasize that all driving processes $L^T$ remain time-inhomogeneous Lévy processes under the corresponding forward measures since they differ only by deterministic drift terms.
 
\subsection{Risky tenor-dependent curves} \label{multiple_term_structure_curves}

Let $m \in \N$ be the number of curves. For every $i \in \{1,\dots,m\}$, we consider an equidistant discrete tenor structure $\mathscr{T}^i \coloneqq \{T_0^i,\dots,T_{n_i}^i\}$ corresponding to curve $i$, where $n_i, n \in \N$, $\mathscr{T}^i \subset \mathscr{T}=\{T_0,\dots,T_n\}$ and $0 \leq T_0^i=T_0 < T_1^i < \dots < T_{n_i}^i=T_n=T^*$. As before, the year fractions between the dates $T_{k-1}^i$ and $T_k^i$ are denoted by $\delta^i \coloneqq \delta^i(T_{k-1}^i,T_k^i)$ for all $k \in \{1,\dots,n_i\}$. Furthermore, we postulate that $\mathscr{T}^m \subset \dots \subset \mathscr{T}^1 \subset \mathscr{T}$. 

%In accordance with the multiple curve approach, this assumption means that the curves are ordered with decreasing liquidity and credit risk. 

We consider the time-inhomogeneous Lévy process $L^{T^*}$ on $(\Omega,\mathscr{F}_{T^*},\mathbb{F}=(\mathscr{F}_t)_{t \in [0,T^*]},P_{T^*}^\mathsf{d})$ and probability measures $P_{T_1}^\mathsf{d},\dots,P_{T_{n-1}}^\mathsf{d}$ from the last subsection. Observe that we have
\begin{align} \label{forward_price_Tki}
F^\mathsf{d}(t,T_{k-1}^i,T_k^i) = \prod_{j \in \mathscr{J}_k^i} F^\mathsf{d}(t,T_{j-1},T_j),
\end{align}
where $\mathscr{J}_k^i$ is a short form of $\mathscr{J}_{T_{k-1}^i}^{T_k^i}$ that is defined in (\ref{definition_J_TS}). 

For each $i \in \{1,\dots,m\}$ and $k \in \{1,\dots,n_i\}$, let $L^i(T_{k-1}^i,T_k^i)$ denote the $T_{k-1}^i$-spot Libor or Euribor rate corresponding to tenor $\delta^i$. Let us assume that $L^i(T_{k-1}^i,T_k^i)$ is an $\mathscr{F}_{T_{k-1}^i}$-measurable random variable. Then, we define
\begin{align} \label{reference_rate}
L^i(t,T_{k-1}^i,T_k^i) \coloneqq \E_{T_k^i}^\mathsf{d}\big[L^i(T_{k-1}^i,T_k^i)|\mathscr{F}_t\big] \quad \quad \quad (t \leq T_{k-1}^i)
\end{align}
where $\E_{T_k^i}^\mathsf{d}[\cdot]$ denotes the expectation under $P_{T_k^i}^\mathsf{d}$. Notice that this definition corresponds to the valuation formula of the market rate of a (textbook) forward rate agreement \cite[see][]{Mercurio09} and we have 
\begin{align*}
L^i(T_{k-1}^i,T_{k-1}^i,T_k^i) = L^i(T_{k-1}^i,T_k^i).
\end{align*} 
Furthermore, $L^i(\cdot,T_{k-1}^i,T_k^i)$ is, by definition, a $P_{T_k^i}^\mathsf{d}$-martingale. We refer to $L^i(t,T_{k-1}^i,T_k^i)$ as the discretely compounded forward rate corresponding to $\delta^i$. These rates are the key quantities of the model. 

\begin{lemma} \label{lemma_discount_forward_process}
An explicit representation of the forward price $F^\mathsf{d}(t,T_{k-1}^i,T_k^i)$ in tenor structure $\mathscr{T}^i$ is given by
\begin{align*}
& F^\mathsf{d}(t,T_{k-1}^i,T_k^i) = F^\mathsf{d}(0,T_{k-1}^i,T_k^i) \exp \Bigg( \int_0^t \sum_{j \in \mathscr{J}_k^i} \lambda^\mathsf{d}(s,T_{j-1}) dL_s^{T_k^i} \\
& + \int_0^t \sum_{j \in \mathscr{J}_k^i} \big[ \langle \lambda^\mathsf{d}(s,T_{j-1}), w(s,T_j,T_k^i)\rangle + b^\mathsf{d}(s,T_{j-1},T_j) \big] ds \Bigg)
\end{align*}
and $F^\mathsf{d}(\cdot,T_{k-1}^i,T_k^i)$ is a $P_{T_k^i}^\mathsf{d}$-martingale.
\end{lemma}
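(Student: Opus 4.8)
The plan is to combine the single-tenor representation of each $F^\mathsf{d}(\cdot,T_{j-1},T_j)$ with the product formula \eqref{forward_price_Tki} and then to re-express the resulting integrator $L^{T_j}$ in terms of the single measure $P_{T_k^i}^\mathsf{d}$ using the change-of-process identity \eqref{canonical_representation}. First I would write, for each $j\in\mathscr{J}_k^i$ and $t\le T_{j-1}$,
\begin{align*}
\frac{F^\mathsf{d}(t,T_{j-1},T_j)}{F^\mathsf{d}(0,T_{j-1},T_j)} = \exp\Big( \int_0^t \lambda^\mathsf{d}(s,T_{j-1})\, dL_s^{T_j} + \int_0^t b^\mathsf{d}(s,T_{j-1},T_j)\, ds \Big),
\end{align*}
which was established in Subsection \ref{basic_curve}. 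Taking the product over $j\in\mathscr{J}_k^i$ and using \eqref{forward_price_Tki} together with $F^\mathsf{d}(0,T_{k-1}^i,T_k^i)=\prod_{j\in\mathscr{J}_k^i} F^\mathsf{d}(0,T_{j-1},T_j)$ (which follows from \eqref{discount_forward_process} by telescoping) yields
\begin{align*}
\frac{F^\mathsf{d}(t,T_{k-1}^i,T_k^i)}{F^\mathsf{d}(0,T_{k-1}^i,T_k^i)} = \exp\Big( \sum_{j\in\mathscr{J}_k^i} \int_0^t \lambda^\mathsf{d}(s,T_{j-1})\, dL_s^{T_j} + \sum_{j\in\mathscr{J}_k^i} \int_0^t b^\mathsf{d}(s,T_{j-1},T_j)\, ds \Big).
\end{align*}

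The crux is to rewrite each $dL_s^{T_j}$ in terms of $dL_s^{T_k^i}$. Since $T_j\le T_k^i$ for every $j\in\mathscr{J}_k^i$, identity \eqref{canonical_representation} (with $T=T_j$, $S=T_k^i$) gives $L^{T_j}_\cdot = \int_0^\cdot w(s,T_j,T_k^i)\, ds + L^{T_k^i}_\cdot$, hence $dL_s^{T_j} = w(s,T_j,T_k^i)\, ds + dL_s^{T_k^i}$. Substituting this into the stochastic integral, $\int_0^t \lambda^\mathsf{d}(s,T_{j-1})\, dL_s^{T_j} = \int_0^t \lambda^\mathsf{d}(s,T_{j-1})\, dL_s^{T_k^i} + \int_0^t \langle \lambda^\mathsf{d}(s,T_{j-1}), w(s,T_j,T_k^i)\rangle\, ds$, and collecting the two deterministic drift contributions under one integral produces exactly the asserted exponent $\int_0^t \sum_{j\in\mathscr{J}_k^i}\lambda^\mathsf{d}(s,T_{j-1})\, dL_s^{T_k^i} + \int_0^t \sum_{j\in\mathscr{J}_k^i}\big[\langle\lambda^\mathsf{d}(s,T_{j-1}),w(s,T_j,T_k^i)\rangle + b^\mathsf{d}(s,T_{j-1},T_j)\big]\, ds$. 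One should check that the scalar products are written with the correct transpose convention matching \eqref{canonical_representation}, but this is bookkeeping.

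For the martingale property, the cleanest route is to observe that $F^\mathsf{d}(\cdot,T_{k-1}^i,T_k^i)=\prod_{j\in\mathscr{J}_k^i}F^\mathsf{d}(\cdot,T_{j-1},T_j)$ and that, by the construction in Subsection \ref{basic_curve}, the density $\frac{dP_{T_{j}}^\mathsf{d}}{dP_{T_k^i}^\mathsf{d}}\big|_{\mathscr{F}_t}$ is proportional to $\prod_{l=j}^{k_0-1}F^\mathsf{d}(t,T_l,T_{l+1})$ where $T_{k_0}=T_k^i$; each factor $F^\mathsf{d}(\cdot,T_{j-1},T_j)$ is a $P_{T_j}^\mathsf{d}$-martingale, so the product $\prod_{j\in\mathscr{J}_k^i}F^\mathsf{d}(\cdot,T_{j-1},T_j)$ is a $P_{T_k^i}^\mathsf{d}$-martingale by a standard successive-Bayes argument (the partial products are themselves the relevant Radon–Nikodym densities restricted to $\mathscr{F}_t$). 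Alternatively, one reads off from the exponential representation that $\sum_{j\in\mathscr{J}_k^i}\lambda^\mathsf{d}(s,T_{j-1})\le M$ componentwise by ($\mathbb{DFP}.2$), so the ordinary exponential is of the Esscher type covered by \citet[][Thm. I.4.61]{JacodShiryaev03} and \citet*{EberleinJacodRaible05}, giving the true $P_{T_k^i}^\mathsf{d}$-martingale property directly; but one must then verify that the drift $\sum_{j\in\mathscr{J}_k^i}[\langle\lambda^\mathsf{d}(\cdot,T_{j-1}),w(\cdot,T_j,T_k^i)\rangle + b^\mathsf{d}(\cdot,T_{j-1},T_j)]$ is precisely the one compensating $\sum_{j\in\mathscr{J}_k^i}\lambda^\mathsf{d}(\cdot,T_{j-1})dL^{T_k^i}$ — which is automatic once one recognizes, via \eqref{canonical_representation} again, that it equals $-\tfrac12\langle\Lambda_s,c_s\Lambda_s^\mathsf{T}\rangle - \int_{\R^d}(e^{\langle\Lambda_s,x\rangle}-1-\langle\Lambda_s,x\rangle)F_s^{T_k^i}(dx)$ with $\Lambda_s:=\sum_{j\in\mathscr{J}_k^i}\lambda^\mathsf{d}(s,T_{j-1})$. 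I expect this last consistency check — matching the aggregated drift to the correct cumulant under $F_s^{T_k^i}$ — to be the only non-mechanical step; it hinges on the multiplicative structure $F_s^{T_j}(dx)=\exp(\sum_{h\in\mathscr{J}_{T_j}^{T_k^i}}\langle\lambda^\mathsf{d}(s,T_{h-1}),x\rangle)F_s^{T_k^i}(dx)$ and a telescoping of the drift terms, so the first (product-of-martingales) argument is preferable as it sidesteps it entirely.
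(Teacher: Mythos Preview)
Your proposal is correct and follows essentially the same route as the paper: the explicit representation is obtained exactly as you describe, by combining the product formula \eqref{forward_price_Tki} with the change-of-driver identity \eqref{canonical_representation}, and the martingale property is obtained by your first argument (successive change-of-measure along the product), which is precisely the content of \citet[][Proposition III.3.8]{JacodShiryaev03} that the paper invokes. Your alternative direct-compensation route is also valid but, as you note, the product-of-martingales argument is cleaner and is the one the paper uses.
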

\begin{proof}
Considering representation (\ref{forward_price_Tki}) and applying equation (\ref{canonical_representation}), we obtain this explicit formula for $F^\mathsf{d}(t,T_{k-1}^i,T_k^i)$. Applying \citet[][Proposition III.3.8]{JacodShiryaev03} successively to the product in (\ref{forward_price_Tki}), the martingale property is proved.
\end{proof}

Instead of modelling the dynamics of the forward rates $L^i$ directly, we specify the evolution by modelling the forward spreads relative to $L^\mathsf{d}$ and $F^\mathsf{d}$. Typically, these spreads can be considered in two ways, namely as 
\begin{enumerate}
\item additive forward spreads
\begin{align*}
s^i(t,T_{k-1}^i,T_k^i) \coloneqq L^i(t,T_{k-1}^i,T_k^i) - L^\mathsf{d}(t,T_{k-1}^i,T_k^i)
\end{align*}
or
\item multiplicative forward spreads 
\begin{align*}
S^i(t,T_{k-1}^i,T_k^i) \coloneqq \frac{1 + \delta^i L^i(t,T_{k-1}^i,T_k^i)}{1 + \delta^i L^\mathsf{d}(t,T_{k-1}^i,T_k^i)}= \frac{1 + \delta^i L^i(t,T_{k-1}^i,T_k^i)}{F^\mathsf{d}(t,T_{k-1}^i,T_k^i)}.
\end{align*}
\end{enumerate}
The natural choice in the forward price framework are multiplicative forward spreads. Following this approach we can easily ensure the observed monotonicity between each risky curve and the basic one by modelling the multiplicative forward spreads as quantities which are larger than one.

We have
\begin{align} \label{relation_spreads}
1 + \delta^i L^i(\cdot,T_{k-1}^i,T_k^i) = S^i(\cdot,T_{k-1}^i,T_k^i)F^\mathsf{d}(\cdot,T_{k-1}^i,T_k^i).
\end{align}
\begin{lemma}
For each pair of dates $T_{k-1}^i, T_k^i \in \mathscr{T}^i$ with $i \in \{1,\dots,m\}$ and $k \in \{1,\dots,n_i\}$, the process $L^i(\cdot,T_{k-1}^i,T_k^i)$ follows a $P_{T_k^i}^\mathsf{d}$-martingale if and only if the multiplicative forward spread $S^i(\cdot,T_{k-1}^i,T_k^i)$ is a $P_{T_{k-1}^i}^\mathsf{d}$-martingale.
\end{lemma}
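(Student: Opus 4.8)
The plan is to reduce the equivalence to the abstract Bayes rule for martingales under an equivalent change of measure, where the Radon--Nikodym density is exactly the basic forward price process.

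First I would record the density process linking the two forward measures associated with the curve-$i$ tenor structure. Writing $\mathscr{J}_k^i$ for $\mathscr{J}_{T_{k-1}^i}^{T_k^i}$ and chaining the one-step densities $\frac{dP_{T_j}^\mathsf{d}}{dP_{T_{j+1}}^\mathsf{d}}\big|_{\mathscr{F}_t} = \frac{F^\mathsf{d}(t,T_j,T_{j+1})}{F^\mathsf{d}(0,T_j,T_{j+1})}$ over the consecutive indices in $\mathscr{J}_k^i$, together with the factorisation (\ref{forward_price_Tki}), one gets
\begin{align*}
Z_t \coloneqq \frac{dP_{T_{k-1}^i}^\mathsf{d}\big|_{\mathscr{F}_t}}{dP_{T_k^i}^\mathsf{d}\big|_{\mathscr{F}_t}} = \prod_{j \in \mathscr{J}_k^i} \frac{F^\mathsf{d}(t,T_{j-1},T_j)}{F^\mathsf{d}(0,T_{j-1},T_j)} = \frac{F^\mathsf{d}(t,T_{k-1}^i,T_k^i)}{F^\mathsf{d}(0,T_{k-1}^i,T_k^i)},
\end{align*}
which by Lemma \ref{lemma_discount_forward_process} is a strictly positive true $P_{T_k^i}^\mathsf{d}$-martingale with $Z_0=1$.

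Next I would invoke the abstract Bayes formula: for any $\mathbb{F}$-adapted process $M$ and $s \le t \le T_{k-1}^i$ one has $\E_{T_{k-1}^i}^\mathsf{d}[M_t \mid \mathscr{F}_s] = Z_s^{-1}\,\E_{T_k^i}^\mathsf{d}[M_t Z_t \mid \mathscr{F}_s]$ and $\E_{T_{k-1}^i}^\mathsf{d}[\abs{M_t}] = \E_{T_k^i}^\mathsf{d}[\abs{M_t Z_t}]$, so that $M$ is a $P_{T_{k-1}^i}^\mathsf{d}$-martingale if and only if $MZ$ is a $P_{T_k^i}^\mathsf{d}$-martingale. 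Applying this with $M = S^i(\cdot,T_{k-1}^i,T_k^i)$, which is $\mathbb{F}$-adapted by construction, and using the defining identity (\ref{relation_spreads}) gives
\begin{align*}
S^i(t,T_{k-1}^i,T_k^i)\, Z_t = \frac{S^i(t,T_{k-1}^i,T_k^i)\, F^\mathsf{d}(t,T_{k-1}^i,T_k^i)}{F^\mathsf{d}(0,T_{k-1}^i,T_k^i)} = \frac{1 + \delta^i L^i(t,T_{k-1}^i,T_k^i)}{F^\mathsf{d}(0,T_{k-1}^i,T_k^i)}.
\end{align*}
Since $F^\mathsf{d}(0,T_{k-1}^i,T_k^i)$ is a positive constant, $S^i(\cdot,T_{k-1}^i,T_k^i)Z$ is a $P_{T_k^i}^\mathsf{d}$-martingale exactly when $1 + \delta^i L^i(\cdot,T_{k-1}^i,T_k^i)$ is, hence (subtracting the constant $1$ and dividing by $\delta^i > 0$) exactly when $L^i(\cdot,T_{k-1}^i,T_k^i)$ is a $P_{T_k^i}^\mathsf{d}$-martingale. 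Combining the two equivalences yields the statement.

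The argument is essentially bookkeeping, so there is no genuine obstacle; the only point deserving a line of care is the integrability/measurability in the Bayes step. One should note that $L^i(\cdot,T_{k-1}^i,T_k^i)$ and $S^i(\cdot,T_{k-1}^i,T_k^i)$ are $\mathbb{F}$-adapted by their very definitions, that $Z$ is a genuine (not merely local) martingale by Lemma \ref{lemma_discount_forward_process}, so the conditional-expectation identities above hold without extra qualifications, and that the identity $\E_{T_{k-1}^i}^\mathsf{d}[\abs{M_t}] = \E_{T_k^i}^\mathsf{d}[\abs{M_t Z_t}]$ transports the $L^1$-condition symmetrically between the two measures, removing any asymmetry between the two directions of the equivalence. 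No semimartingale decomposition or drift computation is needed here.
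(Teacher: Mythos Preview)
Your proof is correct and follows essentially the same route as the paper: the paper simply cites \citet[Proposition III.3.8]{JacodShiryaev03} (the abstract Bayes rule for martingales under an equivalent change of measure) together with the density $\frac{dP_{T_{k-1}^i}^\mathsf{d}}{dP_{T_k^i}^\mathsf{d}}\big|_{\mathscr{F}_t} = \frac{F^\mathsf{d}(t,T_{k-1}^i,T_k^i)}{F^\mathsf{d}(0,T_{k-1}^i,T_k^i)}$, which is exactly what you spell out in detail. Your explicit handling of the integrability transport and the affine passage from $1+\delta^i L^i$ to $L^i$ just makes the one-line citation concrete.
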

\begin{proof}
Proposition III.3.8 in \citet[][]{JacodShiryaev03} can be directly applied together with
\begin{align*}
\frac{dP_{T_{k-1}^i}^\mathsf{d}}{dP_{T_k^i}^\mathsf{d}}\Big|_{\mathscr{F}_t} = \frac{F^\mathsf{d}(t,T_{k-1}^i,T_k^i)}{F^\mathsf{d}(0,T_{k-1}^i,T_k^i)}.
\end{align*} 
\end{proof}
Consequently, recalling that $L^i(\cdot,T_{k-1}^i,T_k^i)$ by definition (\ref{reference_rate}) is a $P_{T_k^i}^\mathsf{d}$-martingale, the evolution of $S^i(\cdot,T_{k-1}^i,T_k^i)$ has to be specified in such a way that it is a $P_{T_{k-1}^i}^\mathsf{d}$-martingale. 

For the starting values of the forward rate, we set
\begin{align*}
L^i(0,T_{k-1}^i,T_k^i) =  \E_{T_k^i}^\mathsf{d}\big[L^i(T_{k-1}^i,T_k^i)\big] = \mathsf{FRA}(0,T_{k-1}^i,T_k^i)
\end{align*}
where $\mathsf{FRA}(0,T_{k-1}^i,T_k^i)$ denotes the current market rate of a textbook forward rate agreement with respect to dates $T_{k-1}^i$ and $T_k^i$. The initial values of the forward spreads are then given by
\begin{align} \label{initial_spreads}
S^i(0,T_{k-1}^i,T_k^i) = \frac{1 + \delta^i L^i(0,T_{k-1}^i,T_k^i)}{F^\mathsf{d}(0,T_{k-1}^i,T_k^i)},
\end{align}
where $F^\mathsf{d}(0,T_{k-1}^i,T_k^i)$ is also obtained from market data combining (\ref{discount_forward_process}) and (\ref{forward_price_Tki}). More exactly, the quantities on the right hand side of (\ref{initial_spreads}) are obtained from the multiple term structure curves that are bootstrapped by using data from tenor specific deposits, forward rate agreements and swaps.  

Below, we shall develop two modelling approaches offering different levels of tractability in exchange for the properties which one wants to achieve in the model. 

\subsubsection{Model (a)}

To get a maximum of tractability we allow in the first approach that the value of the multiplicative forward spreads may become less than one which is equivalent to additive spreads being negative. To this end, we model $S^i(\cdot,T_{k-1}^i,T_k^i)$ below as an ordinary exponential. 

We start with the following input which can be interpreted as the volatility of $S^i(\cdot,T_{k-1}^i,T_k^i)$.
\begin{enumerate}
\item[($\mathbb{MFP}.a$)] For each $i \in \{1,\dots,m\}$ and each maturity $T_{k-1}^i \in \mathscr{T}^i$ with $k \in \{1,\dots,n_i\}$, there is a bounded, continuous and deterministic function $\gamma^i(\cdot,T_{k-1}^i)$ given by
\begin{align*}
\gamma^i(\cdot,T_{k-1}^i):
\begin{cases}
[0,T^*] \rightarrow \R_+^d \\
t \mapsto \gamma^i(t,T_{k-1}^i)=(\gamma^{i,1}(t,T_{k-1}^i),\dots,\gamma^{i,d}(t,T_{k-1}^i))
\end{cases}
\end{align*}
where we require 
\begin{align*}
\sum_{k=1}^{n_i} \left( \lambda^{\mathsf{d},j}(t,T_{k-1}^i) + \gamma^{i,j}(t,T_{k-1}^i) \right) \leq M, \quad \text{for all}~ t \in [0,T^*]~\text{and}~j \in \{1,\dots,d\}.
\end{align*}
Again the constant $M$ is from assumption $(\mathbb{EM})$ and we set $\gamma^i(t,T_{k-1}^i)=(0,\dots,0)$ for $t > T_{k-1}^i$.
\end{enumerate}

We postulate for any $i \in \{1,\dots,m\}$ and each pair of dates $T_{k-1}^i, T_k^i$ that
\begin{align*}
\frac{S^i(t,T_{k-1}^i,T_k^i)}{S^i(0,T_{k-1}^i,T_k^i)} = \exp\Big( \int_0^t \gamma^i(s,T_{k-1}^i)dL_s^{T_{k-1}^i} + \int_0^t b^i(s,T_{k-1}^i)ds \Big)
\end{align*}
where $L^{T_{k-1}^i}$ is defined in subsection \ref{basic_curve} and the drift term $b^i(\cdot,T_{k-1}^i)$ is chosen such that $S^i(\cdot,T_{k-1}^i,T_k^i)$ is a $P_{T_{k-1}^i}^\mathsf{d}$-martingale, namely
\begin{align*}
b^i(t,T_{k-1}^i) = & - \frac{1}{2} \langle \gamma^i(t,T_{k-1}^i), c_t \gamma^i(t,T_{k-1}^i)^\mathsf{T} \rangle \\
& - \int_{\R^d} \left( e^{\langle \gamma^i(t,T_{k-1}^i), x\rangle} - 1 - \langle \gamma^i(t,T_{k-1}^i), x \rangle \right) F_t^{T_{k-1}^i}(dx).
\end{align*}
By Lemma \ref{lemma_discount_forward_process}, we have
\begin{align*}
\frac{F^\mathsf{d}(t,T_{k-1}^i,T_k^i)}{ F^\mathsf{d}(0,T_{k-1}^i,T_k^i) } & = \exp\Bigg( \int_0^t \sum_{j \in \mathscr{J}_k^i} \lambda^\mathsf{d}(s,T_{j-1}) dL_s^{T_k^i} \\
& + \int_0^t \sum_{j \in \mathscr{J}_k^i} \big[ \langle \lambda^\mathsf{d}(s,T_{j-1}), w(s,T_j,T_k^i) \rangle + b^\mathsf{d}(s,T_{j-1},T_j) \big]ds \Bigg)
\end{align*}
and $F^\mathsf{d}(\cdot,T_{k-1}^i,T_k^i)$ is a $P_{T_k^i}^\mathsf{d}$-martingale. This forward price process represents at the same time the density process for the following measure change, namely 
\begin{align*}
\frac{dP_{T_{k-1}^i}^\mathsf{d}}{dP_{T_k^i}^\mathsf{d}}\Big|_{\mathscr{F}_t} = \frac{F^\mathsf{d}(t,T_{k-1}^i,T_k^i)}{F^\mathsf{d}(0,T_{k-1}^i,T_k^i)}.
\end{align*} 

Using representation (\ref{relation_spreads}) and applying (\ref{canonical_representation}), we obtain the forward rate $L^i(\cdot,T_{k-1}^i,T_k^i)$ from
\begin{align*}
& 1 + \delta^i L^i(t,T_{k-1}^i,T_k^i) = S^i(t,T_{k-1}^i,T_k^i) F^\mathsf{d}(t,T_{k-1}^i,T_k^i)\\
= & \left( 1 + \delta^i L^i(0,T_{k-1}^i,T_k^i) \right)  \exp \Bigg( \int_0^t \gamma^i(s,T_{k-1}^i)dL_s^{T_{k-1}^i} + \int_0^t b^i(s,T_{k-1}^i)ds \Bigg) \\
& \times \exp \Bigg( \int_0^t \sum_{j \in \mathscr{J}_k^i} \lambda^\mathsf{d}(s,T_{j-1}) dL_s^{T_k^i} \\
& + \int_0^t \sum_{j \in \mathscr{J}_k^i} \big[ \langle \lambda^\mathsf{d}(s,T_{j-1}), w(s,T_j,T_k^i) \rangle + b^\mathsf{d}(s,T_{j-1},T_j) \big] ds \Bigg) \\
= & \left( 1 + \delta^i L^i(0,T_{k-1}^i,T_k^i) \right) 
\exp \Bigg( \int_0^t \big[ \sum_{j \in \mathscr{J}_k^i} \lambda^\mathsf{d}(s,T_{j-1}) + \gamma^i(s,T_{k-1}^i)\big] dL_s^{T_k^i} \\
& + \int_0^t \Big[ b^i(s,T_{k-1}^i) + \langle \gamma^i(s,T_{k-1}^i), w(s,T_{k-1}^i,T_k^i) \rangle \\
& + \sum_{j \in \mathscr{J}_k^i} \big[ \langle \lambda^\mathsf{d}(s,T_{j-1}), w(s,T_j,T_k^i) \rangle + b^\mathsf{d}(s,T_{j-1},T_j) \big] \Big] ds \Bigg).
\end{align*} 

\textbf{Remark:} In this model, the forward reference rates as well as the $\delta^i$-forward rates can become negative in accordance with the current market situation. In particular, the initial rates can already be negative. Note that starting from positive (negative) initial rates does not mean that the rates remain positive (negative) over time.  
 
\subsubsection{Model (b)}

Now we will specify a model which ensures that the multiplicative forward spreads are larger than one if the initial spreads are already larger than one. This is equivalent to the positivity of the additive spreads. We mention already at this point that the pricing of derivatives becomes only slightly less tractable than in model (a). 

As in model (a) we need the volatility functions to satisfy the following conditions
\begin{enumerate}
\item[($\mathbb{MFP}.b$)] For each $i \in \{1,\dots,m\}$ and each maturity $T_{k-1}^i \in \mathscr{T}^i$ with $k \in \{1,\dots,n_i\}$, there is a bounded, continuous and deterministic function $\bar{\gamma}^i(\cdot,T_{k-1}^i)$ given by
\begin{align*}
\bar{\gamma}^i(\cdot,T_{k-1}^i):
\begin{cases}
[0,T^*] \rightarrow \R_+^d \\
t \mapsto \bar{\gamma}^i(t,T_{k-1}^i)=(\bar{\gamma}^{i,1}(t,T_{k-1}^i),\dots,\bar{\gamma}^{i,d}(t,T_{k-1}^i))
\end{cases}
\end{align*}
where we require 
\begin{align*}
\sum_{k=1}^{n_i} \left( \lambda^{\mathsf{d},j}(t,T_{k-1}^i) + \bar{\gamma}^{i,j}(t,T_{k-1}^i) \right) \leq M, \quad \text{for all}~ t \in [0,T^*]~\text{and}~j \in \{1,\dots,d\}.
\end{align*}
We set $\bar{\gamma}^i(t,T_{k-1}^i)=(0,\dots,0)$ for $t > T_{k-1}^i$.
\end{enumerate} 

For any $i \in \{1,\dots,m\}$ and each pair of dates $T_{k-1}^i, T_k^i$ we assume that 
\begin{align} \label{spreads_b}
\frac{S^i(t,T_{k-1}^i,T_k^i) - 1}{S^i(0,T_{k-1}^i,T_k^i) - 1} =   \exp\Big(  \int_0^t \bar{\gamma}^i(s,T_{k-1}^i)dL_s^{T_{k-1}^i} + \int_0^t \bar{b}^i(s,T_{k-1}^i)ds \Big),
\end{align}
where 
\begin{align} \label{exp_comp}
\bar{b}^i(t,T_{k-1}^i) = & - \frac{1}{2} \langle \bar{\gamma}^i(t,T_{k-1}^i), c_t \bar{\gamma}^i(t,T_{k-1}^i)^\mathsf{T} \rangle  \nonumber \\
& - \int_{\R^d} \left( e^{\langle \bar{\gamma}^i(t,T_{k-1}^i), x \rangle} - 1 - \langle \bar{\gamma}^i(t,T_{k-1}^i), x \rangle \right) F_t^{T_{k-1}^i}(dx).
\end{align}
Note that 
\begin{align*} 
& S^i(t,T_{k-1}^i,T_k^i)  =  \\
& 1 + (S^i(0,T_{k-1}^i,T_k^i) -1)  \exp\Big(  \int_0^t \bar{\gamma}^i(s,T_{k-1}^i)dL_s^{T_{k-1}^i} + \int_0^t \bar{b}^i(s,T_{k-1}^i)ds \Big)
\end{align*}
and one sees that $S^i(\cdot,T_{k-1}^i,T_k^i)$ is a $P_{T_{k-1}^i}^\mathsf{d}$-martingale by the choice of the exponential compensator defined via (\ref{exp_comp}).

In an analogous way as in the previous subsection, we get a representation for $1+\delta^iL^i(t,T_{k-1}^i,T_k^i)$, namely
\begin{align*}
& 1 + \delta^i L^i(t,T_{k-1}^i,T_k^i) = D^\mathsf{d}(t,T_{k-1}^i,T_k^i) \exp \Bigg( \int_0^t \sum_{j \in \mathscr{J}_k^i} \lambda^\mathsf{d}(s,T_{j-1}) dL_s^{T_k^i} \Bigg) \nonumber \\
& + D^i(t,T_{k-1}^i,T_k^i) \exp \Bigg( \int_0^t \Big[\sum_{j \in \mathscr{J}_k^i} \lambda^\mathsf{d}(s,T_{j-1}) + \bar{\gamma}^i(s,T_{k-1}^i)\Big]dL_s^{T_k^i} \Bigg),
\end{align*}
where we set
\begin{align*}
D^\mathsf{d}(t,T_{k-1}^i,T_k^i) \coloneqq & F^\mathsf{d}(0,T_{k-1}^i,T_k^i) \exp \Bigg( \int_0^t \sum_{j \in \mathscr{J}_k^i} \big[  b^\mathsf{d}(s,T_{j-1},T_j) \\
& +  \langle \lambda^\mathsf{d}(s,T_{j-1}), w(s,T_j,T_k^i) \rangle \big] ds \Bigg)
\end{align*}
and
\begin{align*}
D^i(t,T_{k-1}^i,T_k^i) \coloneqq F^\mathsf{d}(0,T_{k-1}^i,T_k^i)(S^i(0,T_{k-1}^i,T_k^i) - 1) \exp\Bigg( \int_0^t d^i(s,T_{k-1}^i,T_k^i) ds \Bigg)
\end{align*}
with
\begin{align*}
d^i(s,T_{k-1}^i,T_k^i) \coloneqq & \sum_{j \in \mathscr{J}_k^i} \big[ \langle \lambda^\mathsf{d}(s,T_{j-1}), w(s,T_j,T_k^i)\rangle + b^\mathsf{d}(s,T_{j-1},T_j) \big] \\
& + \bar{b}^i(s,T_{k-1}^i) + \langle \bar{\gamma}^i(s,T_{k-1}^i), w(s,T_{k-1}^i,T_k^i) \rangle.
\end{align*}
Note that for the initial values we have the relation
\begin{align*}
& F^\mathsf{d}(0,T_{k-1}^i,T_k^i)(S^i(0,T_{k-1}^i,T_k^i) - 1) = \\
& \quad \quad \quad \quad 1+ \delta^i L^i(0,T_{k-1}^i,T_k^i) - \prod_{j \in \mathscr{J}_k^i} [1 + \delta L^\mathsf{d}(0,T_{j-1},T_j)].
\end{align*}

\section{Pricing formula for caps}

Let $l \in \{1,\dots,m \}$. The time-$t$ price of a caplet with tenor $\delta^l$,  maturity $T \in \mathscr{T}^l$ and strike $K$, for $t \leq T$ and $T + \delta^l = T_k \in \mathscr{T}^l$ for some $k \in \{1,\dots,n\}$, is given by
\begin{align} \label{valuation_caplet}
\mathsf{Cpl}(t,T,\delta^l,K) & \coloneqq \delta^l B_t^\mathsf{d}(T_k) \E_{T_k}^\mathsf{d}\big[\left(L^l(T,T_k) - K \right)^+ \big| \mathscr{F}_t \big] \nonumber \\
& = B_t^\mathsf{d}(T_k) \E_{T_k}^\mathsf{d}\big[\left(1 + \delta^l L^l(T,T,T_k) - (1 + \delta^l K) \right)^+ \big| \mathscr{F}_t \big] \nonumber \\
& = B_t^\mathsf{d}(T_k) \E_{T_k}^\mathsf{d}\big[\Big( F^\mathsf{d}(T,T,T_k) S^l(T,T,T_k) - \tilde{K}^l \Big)^+ \big| \mathscr{F}_t \big] \nonumber \\
& = B_t^\mathsf{d}(T_k) (Z_t^k)^{-1} \E_{T^*}^\mathsf{d}\big[Z_T^k \Big(  F^\mathsf{d}(T,T,T_k) S^l(T,T,T_k) - \tilde{K}^l \Big)^+ \big| \mathscr{F}_t \big]
\end{align}
where $\tilde{K}^l \coloneqq 1 + \delta^l K$ and
\begin{align*}
Z_T^k \coloneqq 
\begin{cases}
\prod_{j \in \mathscr{J}_{T_k}^{T^*}} \frac{F^\mathsf{d}(T,T_{j-1},T_j)}{F^\mathsf{d}(0,T_{j-1},T_j)}, &\quad T_k < T^* \\
1, &\quad T_k = T^*.
\end{cases}
\end{align*}

We make the following 
\begin{vol} \label{VOL}
For every $l \in \{1,\dots,m\}$ and all $T \in [0,T^*]$, the volatility functions are decomposable in the form
\begin{align*} 
\lambda^\mathsf{d}(t,T) = \lambda_1^\mathsf{d}(T) \lambda(t)
\end{align*}
\begin{align*}
\gamma^l(t,T) = \gamma_1^l(T) \lambda(t)
\end{align*}
and
\begin{align*}
\bar{\gamma}^l(t,T) = \bar{\gamma}_1^l(T) \lambda(t)
\end{align*}
where $\lambda_1^\mathsf{d}$, $\gamma_1^l$ and $\bar{\gamma}_1^l:[0,T^*] \rightarrow \R_+$ and $\lambda:[0,T^*] \rightarrow \R^d$ are deterministic and continuous functions. The vector $\lambda(t)$ is bounded in the sense of
\begin{align*}
\abs{\lambda^k(t)} \leq M^{'}
\end{align*}
for every $t \in [0,T^*]$, $k \in \{1,\dots,d\}$ and a constant $M^{'}<M$. Furthermore, we assume that
\begin{align*}
\Lambda^l(T,T_n) \coloneqq \sum_{j \in \mathscr{J}_T^{T_n}} \lambda_1^\mathsf{d}(T_{j-1}) + \gamma_1^l(T) < R
\end{align*}
for $R=1+\frac{M-M^{'}}{M^{'}}$ and all $T \in \mathscr{T}$. In the same way we assume
\begin{align*}
\bar{\Lambda}^l(T,T_n) \coloneqq \sum_{j \in \mathscr{J}_T^{T_n}} \lambda_1^\mathsf{d}(T_{j-1}) + \bar{\gamma}_1^l(T) < R
\end{align*}
for all $T \in \mathscr{T}$.
\end{vol}

Let us define the random variable $X_T \coloneqq \int_0^T \lambda(s)dL_s^{T^*}$ and consider its extended characteristic function $\varphi_{X_T}$ under $P_{T^*}^\mathsf{d}$ which can be expressed as
\begin{align} \label{char_X}
\varphi_{X_T}(z) = \exp \left( \int_0^T \theta_s\left(iz \lambda(s)\right)ds \right).
\end{align}
For details of this representation compare \citet[Lemma 3.1]{EberleinRaible99}. 

Recall that the extended cumulant of $L^{T^*}$ with respect to $P_{T^*}^\mathsf{d}$ is given by
\begin{align*}
\theta_t(z) = \frac{1}{2} \langle z, c_t z \rangle + \int_{\R^d} \left(e^{\langle z,x\rangle}-1-\langle z,x \rangle \right) F_t^{T^*}(dx)
\end{align*}
for any $z \in \C$ where it is defined.

We will apply the Fourier based valuation method in order to make the formula for the time-0 price of a caplet numerically accessible. We write the right-hand side of (\ref{valuation_caplet}) for $t=0$ as an expected value of a payoff function $f_K^{k,l}$ applied to $X_T$. Since $Z_0^k \equiv 1$  we obtain
\begin{align} \label{caplet_price_formula}
\mathsf{Cpl}(0,T,\delta^l,K) = B_0^\mathsf{d}(T_k) \E_{T^*}^\mathsf{d}\big[ f_K^{k,l}(X_T) \big].
\end{align}
The explicit form of the function $f_K^{k,l}$ will be derived with the help of (\ref{canonical_representation}), the respective form of the expression 
\begin{align*}
1 + \delta^l L^l(T,T,T_k)=F^\mathsf{d}(T,T,T_k) S^l(T,T,T_k) 
\end{align*}
and the density (for $k \leq n-1$)
\begin{align*}
Z_T^k = \exp\Bigg( & \int_0^T \sum_{j \in \mathscr{J}_{T_k}^{T^*}} \lambda^\mathsf{d}(s,T_{j-1})dL_s^{T^*} + \int_0^T \big[ \sum_{j \in \mathscr{J}_{T_k}^{T^*}} b^\mathsf{d}(s,T_{j-1},T_j) \\
& + \mathds{1}_{\{k \leq n-2\}}  \sum_{j \in \mathscr{J}_{T_k}^{T_{n-1}}}  \langle \lambda^\mathsf{d}(s,T_{j-1}), w(s,T_j,T^*) \rangle \big]ds\Bigg).
\end{align*}

\subsection{Numerics}

In this and the following subsection we derive  numerically efficient forms of the caplet price formula (\ref{caplet_price_formula}) for the two model variants.

\subsubsection{Model (a)}

For any $k \in \{1,\dots,n-1\}$, we have
\begin{align*}
f_K^{k,l}(x) \coloneqq \Bigg( & \bar{D}^l(T,T,T_k) \exp (\big[ \sum_{j \in \mathscr{J}_T^{T_n}} \lambda_1^\mathsf{d}(T_{j-1}) + \gamma_1^l(T)\big] x) \\
&  - \tilde{K}^l A^\mathsf{d}(T,T_k) \exp \Big( \sum_{j \in \mathscr{J}_{T_k}^{T_n}} \lambda_1^\mathsf{d}(T_{j-1}) x \Big) \Bigg)^+
\end{align*}
and
\begin{align*}
f_K^{n,l}(x) \coloneqq \Bigg( & \hat{D}^l(T,T,T_n) \exp (\big[ \sum_{j \in \mathscr{J}_T^{T_n}} \lambda_1^\mathsf{d}(T_{j-1}) + \gamma_1^l(T)\big] x) - \tilde{K}^l  \Bigg)^+
\end{align*}
where we set
\begin{align*}
\bar{D}^l(T,T,T_k) \coloneqq \hat{D}^l(T,T,T_k) A^\mathsf{d}(T,T_k) \hat{C}^l(T,T_k)
\end{align*}
with
\begin{align*}
 \hat{D}^l(T,T,T_k) \coloneqq & \left( 1 + \delta^l L^l(0,T,T_k) \right) \\
& \exp \Bigg(  \int_0^T \Big[ b^l(s,T) + \langle \gamma^l(s,T), w(s,T,T_k) \rangle \\
& + \sum_{j \in \mathscr{J}_T^{T_k}} \big[ \langle \lambda^\mathsf{d}(s,T_{j-1}), w(s,T_j,T_k) \rangle + b^\mathsf{d}(s,T_{j-1},T_j) \big] \Big] ds \Bigg)
\end{align*}
\begin{align*}
A^\mathsf{d}(T,T_k) \coloneqq \exp\Bigg( & \int_0^T \big[ \sum_{j \in \mathscr{J}_{T_k}^{T^*}} b^\mathsf{d}(s,T_{j-1},T_j) \\
& + \mathds{1}_{\{k \leq n-2\}}  \sum_{j \in \mathscr{J}_{T_k}^{T_{n-1}}} \langle \lambda^\mathsf{d}(s,T_{j-1}), w(s,T_j,T^*) \rangle \big]ds\Bigg)
\end{align*}
and
\begin{align*}
\hat{C}^l(T,T_k) \coloneqq \exp\Bigg( \int_0^T  \langle \sum_{j \in \mathscr{J}_T^{T_k}} \lambda^\mathsf{d}(s,T_{j-1}) + \gamma^l(s,T), w(s,T_k,T^*) \rangle ds \Bigg).
\end{align*}

The dampened payoff function  is defined by
\begin{align*}
g_K^{k,l}(x) \coloneqq e^{-R x} f_K^{k,l}(x)
\end{align*}
for any $x \in \R$ and some $R \in \R$.

\begin{proposition} \label{prop1}
The time-0 price of the caplet is given by
\begin{align} \label{fourier_caplet_priceA}
\mathsf{Cpl}(0,T,\delta^l,K) = \frac{B_0^\mathsf{d}(T_k)}{\pi} \int_0^\infty \mathsf{Re}\left(\varphi_{X_T}(u-iR) \hat{f}_K^{k,l}(iR-u)\right) du
\end{align}
where $\hat{f}_K^{k,l}$ denotes the extended Fourier transform of $f_K^{k,l}$ admitting the representation
\begin{align*}
\hat{f}_K^{k,l}(z) = e^{i z x_k} \Big[ & - \bar{D}^l(T,T,T_k) \frac{e^{\big[\sum_{j \in \mathscr{J}_T^{T_n}} \lambda_1^\mathsf{d}(T_{j-1}) + \gamma_1^l(T) \big] x_k}}{\sum_{j \in \mathscr{J}_T^{T_n}} \lambda_1^\mathsf{d}(T_{j-1}) + \gamma_1^l(T)  + iz} \\
& + \tilde{K}^l A^\mathsf{d}(T,T_k) \frac{e^{\sum_{j \in \mathscr{J}_{T_k}^{T_n}} \lambda_1^\mathsf{d}(T_{j-1}) x_k}}{\sum_{j \in \mathscr{J}_{T_k}^{T_n}} \lambda_1^\mathsf{d}(T_{j-1}) + iz} \Big]
\end{align*}
for any $k \in \{1,\dots,n-1\}$ and
\begin{align*}
\hat{f}_K^{n,l}(z) = e^{i z x_n} \Big[ & - \hat{D}^l(T,T,T_n) \frac{e^{\big[ \sum_{j \in \mathscr{J}_T^{T_n}} \lambda_1^\mathsf{d}(T_{j-1}) + \gamma_1^l(T)\big] x_n}}{ \sum_{j \in \mathscr{J}_T^{T_n}} \lambda_1^\mathsf{d}(T_{j-1}) + \gamma_1^l(T) + iz} + \frac{\tilde{K}^l}{iz} \Big],
\end{align*}
where $z \in \C$ with 
\begin{align*}
\sum_{j \in \mathscr{J}_T^{T_n}} \lambda_1^\mathsf{d}(T_{j-1}) + \gamma_1^l(T) < \mathsf{Im}(z) 
\end{align*}
for all $T \in \mathscr{T}$, $x_k$ is the unique root of the function
\begin{align*}
h^{k,l}(x) \coloneqq & \hat{D}^l(T,T,T_k) \hat{C}^l(T,T_k) \exp \Big(\big[ \sum_{j \in \mathscr{J}_T^{T_k}} \lambda_1^\mathsf{d}(T_{j-1}) + \gamma_1^l(T)\big] x\Big)  - \tilde{K}^l,  
\end{align*}
$x_n$ is the unique root of the function
\begin{align*}
h^{n,l}(x) \coloneqq &  \hat{D}^l(T,T,T_n) \exp \Big(\big[ \sum_{j \in \mathscr{J}_T^{T_n}} \lambda_1^\mathsf{d}(T_{j-1}) + \gamma_1^l(T)\big] x\Big) - \tilde{K}^l 
\end{align*}
and $R=1+\frac{M-M^{'}}{M^{'}}$.
\end{proposition}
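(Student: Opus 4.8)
The plan is to reduce the caplet price formula \eqref{caplet_price_formula} to a Fourier integral by combining the general Fourier-pricing methodology (as in \citet{EberleinRaible99} or \citet{EberleinGlauPapapantoleon10}) with the explicit exponential structure of the payoff function $f_K^{k,l}$ that was derived above.

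\textbf{Step 1: Verify the explicit form of $f_K^{k,l}$.} Starting from \eqref{valuation_caplet} with $t=0$, I would insert the representations of $F^\mathsf{d}(T,T,T_k)$ from Lemma \ref{lemma_discount_forward_process}, of $S^l(T,T,T_k)$ from the Model (a) specification, and of the density $Z_T^k$. Under Assumption $(\mathbb{VOL})$ every stochastic integral $\int_0^T \lambda^\mathsf{d}(s,T_{j-1})\,dL_s^{T^*}$, $\int_0^T \gamma^l(s,T)\,dL_s^{T^*}$ becomes a scalar multiple of $X_T = \int_0^T \lambda(s)\,dL_s^{T^*}$, namely $\lambda_1^\mathsf{d}(T_{j-1}) X_T$ resp.\ $\gamma_1^l(T) X_T$; the deterministic $ds$-integrals collect into the prefactors $\hat D^l$, $A^\mathsf{d}$ and $\hat C^l$. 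One must also convert all driving processes $L^{T_k^i}$ appearing in Lemma \ref{lemma_discount_forward_process} back to $L^{T^*}$ using \eqref{canonical_representation}; the deterministic drift corrections $w(s,\cdot,\cdot)$ that this produces are precisely what sits inside $\hat C^l$ and $A^\mathsf{d}$. This yields $B_0^\mathsf{d}(T_k)\,\E_{T^*}^\mathsf{d}[f_K^{k,l}(X_T)]$ with $f_K^{k,l}$ of the stated two-exponential-minus-constant form (and its degenerate $k=n$ variant where $Z_T^n\equiv 1$).

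\textbf{Step 2: Compute the extended Fourier transform $\hat f_K^{k,l}$.} The positive part $(\,\cdot\,)^+$ vanishes to the left of the unique zero $x_k$ of the increasing function $h^{k,l}$ (uniqueness and existence follow because the exponents $\sum_{j\in\mathscr{J}_T^{T_k}}\lambda_1^\mathsf{d}(T_{j-1})+\gamma_1^l(T)$ are strictly positive, so $h^{k,l}$ is continuous, strictly increasing, tends to $-\tilde K^l<0$ at $-\infty$ and to $+\infty$ at $+\infty$; the $k=n$ case is identical with $h^{n,l}$). Hence $f_K^{k,l}(x)=\mathds{1}_{\{x\ge x_k\}}\big(\alpha e^{\beta x}-\kappa e^{\delta x}\big)$ for suitable constants, and $\hat f_K^{k,l}(z)=\int_{\R} e^{izx} f_K^{k,l}(x)\,dx$ is an elementary integral $\int_{x_k}^\infty e^{izx}(\alpha e^{\beta x}-\kappa e^{\delta x})\,dx$, convergent precisely when $\mathsf{Im}(z)$ exceeds the largest exponent $\sum_{j\in\mathscr{J}_T^{T_n}}\lambda_1^\mathsf{d}(T_{j-1})+\gamma_1^l(T)$, which gives the closed forms stated (using $h^{k,l}(x_k)=0$ to simplify, and noting that the second exponent drops out in the $k=n$ case leaving the $\tilde K^l/(iz)$ term).

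\textbf{Step 3: Apply Parseval/Fourier inversion.} With $R=1+\frac{M-M^{'}}{M^{'}}$, I check that $g_K^{k,l}(x)=e^{-Rx}f_K^{k,l}(x)$ lies in $L^1(\R)\cap L^2(\R)$ — integrability at $+\infty$ needs $R$ to exceed the top exponent $\Lambda^l(T,T_n)$, which is exactly guaranteed by the bound $\Lambda^l(T,T_n)<R$ in Assumption $(\mathbb{VOL})$ — and that $\varphi_{X_T}(u-iR)$ is well defined and integrable, which follows from $(\mathbb{EM})$ together with $M^{'}R = M^{'}+M-M^{'}=M$ so that the argument $iz\lambda(s)$ in \eqref{char_X} stays in the admissible strip $[-(1+\varepsilon)M,(1+\varepsilon)M]^d$. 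Then the Fourier representation $B_0^\mathsf{d}(T_k)\,\E_{T^*}^\mathsf{d}[f_K^{k,l}(X_T)] = \frac{B_0^\mathsf{d}(T_k)}{2\pi}\int_{\R}\varphi_{X_T}(u-iR)\hat f_K^{k,l}(iR-u)\,du$ holds by the standard argument (Parseval's identity for $g_K^{k,l}$ and the moment-generating function of $X_T$), and folding the integral over $(-\infty,0]$ onto $[0,\infty)$ via the reality of the payoff gives the $\frac{1}{\pi}\int_0^\infty\mathsf{Re}(\cdots)\,du$ form in \eqref{fourier_caplet_priceA}.

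\textbf{Main obstacle.} The routine-but-delicate part is bookkeeping in Step 1: correctly tracking all the measure changes $P_{T^*}^\mathsf{d}\to P_{T_k^i}^\mathsf{d}\to P_{T_{k-1}^i}^\mathsf{d}$ and the associated drift terms $w(s,\cdot,\cdot)$, so that the random part of $\log\big(F^\mathsf{d}(T,T,T_k)S^l(T,T,T_k)\big)$ together with $\log Z_T^k$ collapses to a single multiple of $X_T$ and every leftover deterministic factor is accounted for in $\hat D^l$, $A^\mathsf{d}$, $\hat C^l$. The genuinely substantive point is verifying the integrability/strip conditions in Step 3, i.e.\ that the constant $R$ simultaneously dominates all payoff exponents (via $\Lambda^l(T,T_n)<R$) and keeps $\varphi_{X_T}$ finite (via $M^{'}R=M$ and $(\mathbb{EM})$); this is exactly the role of Assumption $(\mathbb{VOL})$ and is what makes the representation \eqref{fourier_caplet_priceA} valid rather than merely formal.
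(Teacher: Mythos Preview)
Your approach matches the paper's: both compute the Fourier transform of the payoff by elementary integration over $[x_k,\infty)$, verify the integrability conditions needed to invoke the Fourier-pricing theorem of \citet{EberleinGlauPapapantoleon10}, and then reduce the real-line integral to $[0,\infty)$ by symmetry. Your Step~1 is material that the paper places \emph{before} the proposition rather than in the proof, but that is a matter of organisation, not substance.

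There is, however, one genuine gap in Step~3. You assert that $\varphi_{X_T}(u-iR)$ is \emph{integrable} in $u$, and that this follows from Assumption $(\mathbb{EM})$ and $M'R=M$. That argument only shows $\varphi_{X_T}(u-iR)$ is \emph{well defined} (the real part of the argument stays in the analyticity strip); it says nothing about decay as $|u|\to\infty$. For a driving process with no Gaussian component and mild jump activity, $\varphi_{X_T}$ need not lie in $L^1$. The paper does not rely on this. Instead it verifies condition (C3) of Theorem~2.2 in \citet{EberleinGlauPapapantoleon10}, namely $\hat g_K^{k,l}\in L^1(\R)$, by showing that $g_K^{k,l}$ belongs to the Sobolev space $H^1(\R)$ and invoking Lemma~2.5 of the same reference. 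Your $L^1\cap L^2$ check on $g_K^{k,l}$ is not enough for Parseval to yield an absolutely convergent integral on the transform side; you need either $\hat g_K^{k,l}\in L^1$ or $\varphi_{X_T}(\cdot-iR)\in L^1$, and the former is what the paper actually establishes. Replacing your integrability claim on $\varphi_{X_T}$ with the $H^1$ argument for $g_K^{k,l}$ closes the gap and brings your proof in line with the paper's.
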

\begin{proof}
The explicit form of the Fourier transforms $\hat{f}_K^{k,l}$ and $\hat{f}_K^{n,l}$ follows from a simple integration exercise. In order to get the caplet formula we apply Theorem 2.2 of \citet{EberleinGlauPapapantoleon10}. Consequently, conditions $\textbf{(C1)}: g_K^{k,l} \in L_{\mathsf{bc}}^1(\R)$, $\textbf{(C2)}: M_{X_T}(R) < \infty$ and $\textbf{(C3)}: \hat{g}_K^{k,l} \in L^1(\R)$ of this theorem have to be verified.

The functions $h^{k,l}$ are strictly increasing and continuous with varying sign and therefore possess a unique root $x_k\in \R$ for every $k \in \{1,\dots,n\}$. Using assumption $(\mathbb{EM})$, the explicit form of $M_{X_T}$ derived from (\ref{char_X}) and the boundedness of $\lambda$ one can find an $R \in (0,1 + \frac{M-M^{'}}{M^{'}}]$ such that condition $\textbf{(C2)}$ is satisfied. We chose $R = 1 + \frac{M-M^{'}}{M^{'}}$ in assumption $(\mathbb{VOL})$. Recall 
\begin{align*} 
\Lambda^l(T,T_n) < R
\end{align*}
for all $T \in \mathscr{T}$. The dampened functions $g_K^{k,l}$ are obviously continuous and for $k \in \{1,\dots,n-1\}$ bounded by
\begin{align*}
g_K^{k,l}(x) \leq \bar{D}^l(T,T,T_k) e^{(\Lambda^l(T,T_n)-R) x_k}. 
\end{align*}
Therefore these functions are also integrable. For $k=n$ we have to replace $\bar{D}^l(T,T,T_k)$ by $\hat{D}^l(T,T,T_n) $. To verify condition $\textbf{(C3)}$, we use Lemma 2.5 of \citet*[][]{EberleinGlauPapapantoleon10}. Let us consider the Sobolev space
\begin{align*}
H^1(\R) \coloneqq \{g \in L^2(\R)| \partial g~\text{exists and}~\partial g \in L^2(\R)\}
\end{align*}
where $\partial g$ denotes the weak derivative of the function $g$. Due to this Lemma it suffices to show that $g_K^{k,l} \in H^1(\R)$, but this is clear because of the form of the function and the upper bounds given above. Hence Theorem 2.2. in the mentioned paper implies
\begin{align*}
\mathsf{Cpl}(0,T,\delta^l,K) = \frac{B_0^\mathsf{d}(T_k)}{2 \pi} \int_\R \varphi_{X_T}(u-iR) \hat{f}_K^{k,l}(iR-u) du.
\end{align*}
An obvious symmetry property of the integrand leads to representation (\ref{fourier_caplet_priceA}).

\end{proof}

\subsubsection{Model (b)}

For any $k \in \{1,\dots,n-1\}$, we have
\begin{align*}
f_K^{k,l}(x) \coloneqq \Bigg( & \tilde{D}^\mathsf{d}(T,T,T_k) \exp\Big( \sum_{j \in \mathscr{J}_T^{T_n}} \lambda_1^\mathsf{d}(T_{j-1}) x \Big) \\
& + \tilde{D}^l(T,T,T_k) \exp \Big( [ \sum_{j \in \mathscr{J}_T^{T_n}} \lambda_1^\mathsf{d}(T_{j-1}) + \bar{\gamma}_1^l(T)] x \Big) \\
& - \tilde{K}^l  A^\mathsf{d}(T,T_k) \exp \Big( \sum_{j \in \mathscr{J}_{T_k}^{T_n}} \lambda_1^\mathsf{d}(T_{j-1}) x \Big) \Bigg)^+
\end{align*}
and
\begin{align*}
f_K^{n,l}(x) = & \Bigg( D^\mathsf{d}(T,T,T_n) \exp \Big( \sum_{j \in \mathscr{J}_T^{T_n}} \lambda_1^\mathsf{d}(T_{j-1}) x \Big) \nonumber \\
& + D^l(T,T,T_n) \exp \Big( \Big[\sum_{j \in \mathscr{J}_T^{T_n}} \lambda_1^\mathsf{d}(T_{j-1}) + \bar{\gamma}_1^l(T)\Big] x \Big) - \tilde{K}^l \Bigg)^+,
\end{align*}
where we set
\begin{align*}
& \tilde{D}^\mathsf{d}(T,T,T_k) \coloneqq D^\mathsf{d}(T,T,T_k) A^\mathsf{d}(T,T_k) C^\mathsf{d}(T,T_k) \\
& \tilde{D}^l(T,T,T_k) \coloneqq D^l(T,T,T_k) A^\mathsf{d}(T,T_k) C^l(T,T_k)
\end{align*}
with
\begin{align*}
C^\mathsf{d}(T,T_k) \coloneqq \exp\Bigg( \int_0^T \sum_{j \in \mathscr{J}_T^{T_k}} \langle \lambda^\mathsf{d}(s,T_{j-1}), w(s,T_k,T^*) \rangle ds \Bigg)
\end{align*}
and
\begin{align*}
C^l(T,T_k) \coloneqq \exp\Bigg( \int_0^T  \langle \sum_{j \in \mathscr{J}_T^{T_k}} \lambda^\mathsf{d}(s,T_{j-1}) + \bar{\gamma}^l(s,T), w(s,T_k,T^*) \rangle ds \Bigg).
\end{align*}

\begin{proposition} \label{prop2}
The time-0 price of the caplet is given by
\begin{align} \label{fourier_caplet_priceB}
\mathsf{Cpl}(0,T,\delta^l,K) = \frac{B_0^\mathsf{d}(T_k)}{\pi} \int_0^\infty \mathsf{Re}\left(\varphi_{X_T}(u-iR) \hat{f}_K^{k,l}(iR-u)\right) du
\end{align}
where $\hat{f}_K^{k,l}$ denotes the extended Fourier transform of $f_K^{k,l}$ that admits the representation
\begin{align*}
\hat{f}_K^{k,l}(z) = e^{i z x_k} \Big[ & - \tilde{D}^\mathsf{d}(T,T,T_k) \frac{e^{\sum_{j \in \mathscr{J}_T^{T_n}} \lambda_1^\mathsf{d}(T_{j-1}) x_k}}{\sum_{j \in \mathscr{J}_T^{T_n}} \lambda_1^\mathsf{d}(T_{j-1}) + iz} \\
& -  \tilde{D}^l(T,T,T_k) \frac{e^{[\sum_{j \in \mathscr{J}_T^{T_n}} \lambda_1^\mathsf{d}(T_{j-1}) + \bar{\gamma}_1^l(T)] x_k}}{\sum_{j \in \mathscr{J}_T^{T_n}} \lambda_1^\mathsf{d}(T_{j-1}) + \bar{\gamma}_1^l(T) + iz} \\
& + \tilde{K}^l A^\mathsf{d}(T,T_k) \frac{e^{\sum_{j \in \mathscr{J}_{T_k}^{T_n}} \lambda_1^\mathsf{d}(T_{j-1}) x_k}}{\sum_{j \in \mathscr{J}_{T_k}^{T_n}} \lambda_1^\mathsf{d}(T_{j-1}) + iz} \Big]
\end{align*}
for any $k \in \{1,\dots,n-1\}$ and
\begin{align*}
\hat{f}_K^{n,l}(z) = e^{i z x_n} \Big[ & - D^\mathsf{d}(T,T,T_n) \frac{e^{\sum_{j \in \mathscr{J}_T^{T_n}} \lambda_1^\mathsf{d}(T_{j-1}) x_n}}{\sum_{j \in \mathscr{J}_T^{T_n}} \lambda_1^\mathsf{d}(T_{j-1}) + iz} \\
& - D^l(T,T,T_n) \frac{e^{[\sum_{j \in \mathscr{J}_T^{T_n}} \lambda_1^\mathsf{d}(T_{j-1}) + \bar{\gamma}_1^l(T)] x_n}}{\sum_{j \in \mathscr{J}_T^{T_n}} \lambda_1^\mathsf{d}(T_{j-1}) + \bar{\gamma}_1^l(T) + iz} + \frac{\tilde{K}^l}{iz} \Big],
\end{align*}
where $z \in \C$ with 
\begin{align*}
\sum_{j \in \mathscr{J}_T^{T_n}} \lambda_1^\mathsf{d}(T_{j-1}) + \bar{\gamma}_1^l(T) < \mathsf{Im}(z) 
\end{align*}
for all $T \in \mathscr{T}$, $x_k$ is the unique root of the function
\begin{align*}
h^{k,l}(x) \coloneqq & D^\mathsf{d}(T,T,T_k) C^\mathsf{d}(T,T_k) \exp\Big( \sum_{j \in \mathscr{J}_T^{T_k}} \lambda_1^\mathsf{d}(T_{j-1}) x \Big) \\
& + D^l(T,T,T_k) C^l(T,T_k) \exp \Big( [ \sum_{j \in \mathscr{J}_T^{T_k}} \lambda_1^\mathsf{d}(T_{j-1}) + \bar{\gamma}_1^l(T)] x \Big) \\
& - \tilde{K}^l,
\end{align*}
$x_n$ is the unique root of the function
\begin{align*}
h^{n,l}(x) \coloneqq & D^\mathsf{d}(T,T,T_n) \exp \Big( \sum_{j \in \mathscr{J}_T^{T_n}} \lambda_1^\mathsf{d}(T_{j-1}) x \Big) \nonumber \\
& +  D^l(T,T,T_n) \exp \Big( \Big[\sum_{j \in \mathscr{J}_T^{T_n}} \lambda_1^\mathsf{d}(T_{j-1}) + \bar{\gamma}_1^l(T)\Big] x \Big) - \tilde{K}^l 
\end{align*}
and $R=1+\frac{M-M^{'}}{M^{'}}$.
\end{proposition}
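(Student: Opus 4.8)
The plan is to follow exactly the route used for Proposition \ref{prop1}, the only genuine novelty being one extra exponential term in the payoff which originates from the additive-in-the-spread structure of Model (b). Starting from (\ref{caplet_price_formula}), the caplet price at time $0$ is $B_0^\mathsf{d}(T_k)\,\E_{T^*}^\mathsf{d}[f_K^{k,l}(X_T)]$, and the goal is to apply the Fourier valuation result of \citet{EberleinGlauPapapantoleon10} to this expectation. The preliminary observation I would make is that, using the identities $\tilde{D}^\mathsf{d}(T,T,T_k)=D^\mathsf{d}(T,T,T_k)A^\mathsf{d}(T,T_k)C^\mathsf{d}(T,T_k)$ and $\tilde{D}^l(T,T,T_k)=D^l(T,T,T_k)A^\mathsf{d}(T,T_k)C^l(T,T_k)$ together with the disjoint splitting $\mathscr{J}_T^{T_n}=\mathscr{J}_T^{T_k}\cup\mathscr{J}_{T_k}^{T_n}$, the payoff factorises as
\begin{align*}
f_K^{k,l}(x)=A^\mathsf{d}(T,T_k)\exp\Big(\sum_{j\in\mathscr{J}_{T_k}^{T_n}}\lambda_1^\mathsf{d}(T_{j-1})\,x\Big)\,\big(h^{k,l}(x)\big)^+
\end{align*}
for $k\le n-1$, and $f_K^{n,l}(x)=(h^{n,l}(x))^+$. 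Since the prefactor is strictly positive, $f_K^{k,l}$ is supported exactly on the half-line $(x_k,\infty)$. The function $h^{k,l}$ is a sum of two exponentials with nonnegative exponents $\sum_{j\in\mathscr{J}_T^{T_k}}\lambda_1^\mathsf{d}(T_{j-1})$ and $\sum_{j\in\mathscr{J}_T^{T_k}}\lambda_1^\mathsf{d}(T_{j-1})+\bar{\gamma}_1^l(T)$ and strictly positive coefficients $D^\mathsf{d}C^\mathsf{d}$, $D^lC^l$ — positivity of $D^l$ being precisely the Model (b) hypothesis that the initial multiplicative spread exceeds one — minus the positive constant $\tilde{K}^l$; hence it is continuous, strictly increasing, runs from $-\tilde{K}^l<0$ to $+\infty$, and admits a unique root $x_k\in\R$. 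The same argument applies to $h^{n,l}$.

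Next I would derive the explicit Fourier transform by termwise integration of $e^{izx}f_K^{k,l}(x)$ over $(x_k,\infty)$: each summand has the form $\int_{x_k}^{\infty}c\,e^{(a+iz)x}\,dx=-c\,e^{(a+iz)x_k}/(a+iz)$, which converges precisely when $\mathsf{Im}(z)$ exceeds the exponent $a$, the binding one being $\sum_{j\in\mathscr{J}_T^{T_n}}\lambda_1^\mathsf{d}(T_{j-1})+\bar{\gamma}_1^l(T)=\bar{\Lambda}^l(T,T_n)$ (for $k=n$, the constant term produces the summand $\tilde{K}^l/(iz)$). Collecting the three terms yields the stated representations of $\hat{f}_K^{k,l}$ and $\hat{f}_K^{n,l}$.

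To conclude, I would invoke Theorem 2.2 of \citet{EberleinGlauPapapantoleon10} with damping constant $R=1+\frac{M-M'}{M'}$ and check its three hypotheses. For condition \textbf{(C2)}, (\ref{char_X}) gives $M_{X_T}(R)=\exp\big(\int_0^T\theta_s(R\lambda(s))\,ds\big)$; since $\abs{\lambda^k(s)}\le M'$ and $RM'=M$, the argument $R\lambda(s)$ lies in $[-M,M]^d$, so Assumption $(\mathbb{EM})$ secures finiteness. For \textbf{(C1)}, the dampened payoff $g_K^{k,l}(x)=e^{-Rx}f_K^{k,l}(x)$ vanishes on $(-\infty,x_k]$ and, on $(x_k,\infty)$, is dominated by a constant multiple of $e^{(\bar{\Lambda}^l(T,T_n)-R)x}$ with $\bar{\Lambda}^l(T,T_n)<R$ by Assumption $(\mathbb{VOL})$, hence is continuous, bounded and integrable. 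For \textbf{(C3)}, by Lemma 2.5 of \citet{EberleinGlauPapapantoleon10} it suffices that $g_K^{k,l}\in H^1(\R)$, which follows at once from the explicit form of $g_K^{k,l}$ and the exponential bound just established. Theorem 2.2 then gives $\mathsf{Cpl}(0,T,\delta^l,K)=\frac{B_0^\mathsf{d}(T_k)}{2\pi}\int_\R\varphi_{X_T}(u-iR)\hat{f}_K^{k,l}(iR-u)\,du$, and the symmetry of the integrand in $u$ reduces this to (\ref{fourier_caplet_priceB}). Beyond Proposition \ref{prop1}, the only point requiring attention is the bookkeeping of the additional exponential term and confirming that strict monotonicity of $h^{k,l}$ — hence uniqueness of $x_k$ — persists; given the positivity of all the coefficients this is routine, so I do not anticipate any real obstacle.
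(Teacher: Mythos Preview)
Your proposal is correct and follows essentially the same route as the paper, which simply declares the proof analogous to that of Proposition~\ref{prop1}, records the bound $g_K^{k,l}(x)\le \tilde{D}^\mathsf{d}(T,T,T_k)e^{(\Lambda^\mathsf{d}(T,T_n)-R)x_k}+\tilde{D}^l(T,T,T_k)e^{(\bar{\Lambda}^l(T,T_n)-R)x_k}$, and appeals to symmetry. Your factorisation $f_K^{k,l}(x)=A^\mathsf{d}(T,T_k)\exp(\cdot)\,(h^{k,l}(x))^+$ and the remark that positivity of $D^l$ is exactly the Model~(b) assumption $S^i(0,\cdot)>1$ are helpful clarifications not spelled out in the paper.
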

\begin{proof}
The proof is analogous to the proof of Proposition \ref{prop1}. In this case we define
\begin{align*}
\Lambda^\mathsf{d}(T,T_n) \coloneqq \sum_{j \in \mathscr{J}_T^{T_n}} \lambda_1^\mathsf{d}(T_{j-1}),
\end{align*}
and recall that
\begin{align*} 
\bar{\Lambda}^l(T,T_n) < R
\end{align*}
for all $T \in \mathscr{T}$. Then, for $k \in \{1,\dots,n-1\}$ the functions $g_K^{k,l}$ are bounded by
\begin{align*}
g_K^{k,l}(x) \leq \tilde{D}^\mathsf{d}(T,T,T_k) e^{(\Lambda^\mathsf{d}(T,T_n)-R) x_k} + \tilde{D}^l(T,T,T_k) e^{( \bar{\Lambda}^l(T,T_n) - R) x_k}
\end{align*}
and thus integrable. For $k=n$ the factors $\tilde{D}^\mathsf{d}$ and $\tilde{D}^l$ have to be replaced by $D^\mathsf{d}$ and $D^l$. The representation (\ref{fourier_caplet_priceB}) follows once again by the symmetry property of the integrand.
\end{proof}

\section{Model Calibration} 
 
\subsection{Data and Approach} 

We calibrate the model variants (a) and (b) to European market data observed in the post crisis period which is provided by Bloomberg. Market rates of deposits, forward rate agreements and swaps (OIS and Euribor indexed) based on different tenors as well as cap quotes indexed on Euribor for a number of maturities and strikes will be used. The cap quotes are given in form of the model dependent implied volatilities (in bps). More specifically we will consider cap volatility quotes indexed on six-month Euribor on September 15, 2016. Consequently, two term structures will be taken into account, namely the basic and the six-month curves. Since we observe a period with negative interest rates the standard log-normal market model can no longer be used in this case. Following market practice we use a multiple curve form of the Bachelier model in this situation to derive the market prices of caps from the volatility quotes. Finally the market prices of the caplets are derived from the cap prices. We also highlight that the considered cap contracts contain negative strike rates.

Based on multiple curve bootstrapping (see \citet{AmetranoBianchetti13} and \cite{GerhartLuetkebohmert18}) we construct tenor-dependent FRA curves. The basic curve is constructed by taking the quoted OIS rates. For each of the tenors we use the corresponding market quotes of deposit rates for the short-term maturities, rates from forward rate agreements for the mid-term part and swap rates for mid- and long-term maturities. Exact cubic spline interpolations are used during the bootstrap procedure. This approach guarantees also enough smoothness of the curves. 

We presented the bootstrapped basic, three-month and six-month FRA curves at the calibration date September 15, 2016 in Figure \ref{FRA_curves} at the bottom on the right-hand side. The values of the basic and the FRA curve for the six-month tenor as given there are used as input data in the calibration procedure. As formulas (\ref{fourier_caplet_priceA}) and (\ref{fourier_caplet_priceB}) show, the values of the basic curve along the tenor structure are needed. We present these values in Figure \ref{Basic_curve}. Note that the discount curve is increasing at the beginning because of the negativity of the rates.

Hereafter we describe the calibration procedure. Let $\Theta$ be the set of admissible model parameters, $\mathcal{T}$ the maturities and $\mathcal{K}$ the strike rates of the considered caps. We minimise the sum of the squared relative errors between model and mid market caplet prices
\begin{align*}
\sum_{T \in \mathcal{T},K \in \mathcal{K}} \left( \frac{\textsf{caplet model price}(\vartheta,T,K) - \textsf{caplet market price}(T,K)}{\textsf{caplet market price}(T,K)} \right)^2
\end{align*}
with respect to $\vartheta \in \Theta$. This optimisation is done by using a randomised Powell algorithm (see \cite{Powell78}). We use then the calibrated model parameters $\hat{\vartheta}$ to determine the model implied volatilities of caps. The differences between the implied volatilities of the model prices and the quoted volatilities specify the accuracy of the calibration. This procedure is done for both model variants (a) and (b).

\begin{figure}[htp]
%\vspace{-1.5cm}
\begin{center}
\hspace{-1.5cm}\includegraphics[width=6cm]{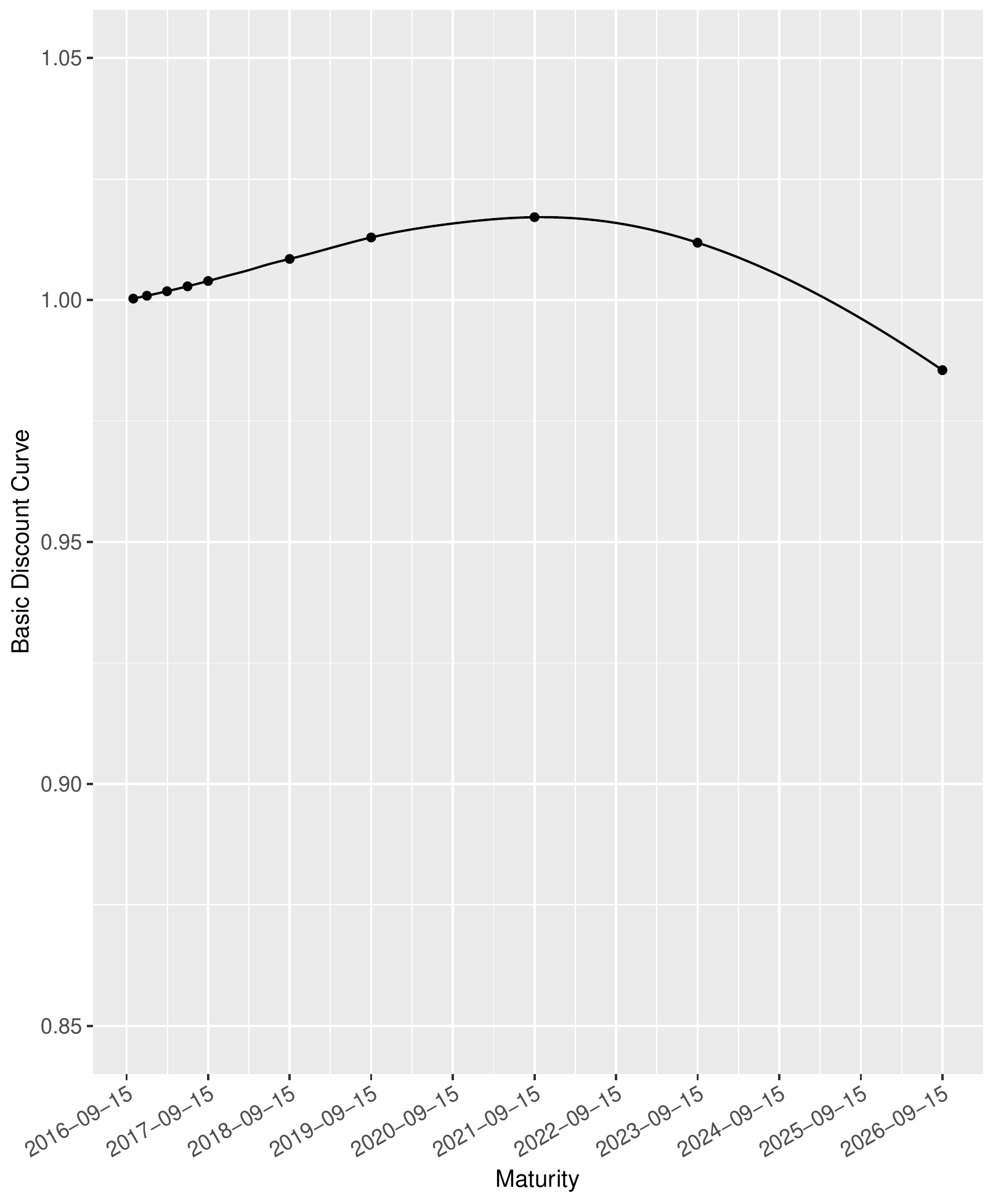}
\end{center}
\caption{Bootstrapped basic curve on September 15, 2016.}
\label{Basic_curve}
\end{figure}

\subsection{Model Specification and Calibration}

First let us specify the driving process $L^{T^*}$ under $P_{T^*}^\mathsf{d}$. We will use a normal inverse Gaussian (NIG) Lévy process with parameters $\alpha$, $\beta$, $\delta$ and $\mu$ (see for example \cite{Eberlein09}) which have to satisfy  $0 \leq\abs{\beta} < \alpha $, $\delta > 0$ and $\mu \in \R$. The last parameter $\mu$ does not enter into the valuation formulas. We  choose it such that the expectation of $L_1^{T^*}$ is equal to zero. The distributions of a NIG process are completely determined by its cumulant function
\begin{align*}
\theta(z) = \mu z + \delta \left(\sqrt{\alpha^2 - \beta^2} - \sqrt{\alpha^2 - (\beta + z)^2}\right)
\end{align*} 
where $\mathsf{Re}(z) \in (-\alpha-\beta,\alpha - \beta)$. We emphasize that only parameters which lead to a Lévy measure $F^{T^*}$ that satisfies Assumption ($\mathbb{EM}$) are admissible.

According to Assumption ($\mathbb{VOL}$) the volatility structures are defined if we specify $\lambda$, $\lambda_1^\mathsf{d}$, $\gamma_1^l$ and $\bar{\gamma}_1^l$. We choose
\begin{align*}
\lambda(t)=\exp(a t), \quad \lambda_1^\mathsf{d}(T)=\sqrt{\abs{a_\mathsf{d}} T}, \quad \gamma_1^l(T)=\sqrt{\abs{a_l} T}, \quad \bar{\gamma}_1^l(T)=\sqrt{\abs{\bar{a}_l} T}.
\end{align*}
Consequently four real-valued parameters $a$, $a_\mathsf{d}$, $a_l$ and $\bar{a}_l$ describe the volatilities. We emphasize that the volatility functions have to satisfy boundedness restrictions according to Assumption ($\mathbb{VOL}$). Thus calibration requires a nonlinear optimization under several constraints.

In both graphs of Figure \ref{Calibration} the grid represents the market volatility surface on September 15, 2016. The points in the graphs indicate the implied volatilities of the calibrated model. In particular the graphs show also that both models are able to cope with negative strike rates as well as negative interest rates which prevailed in September 2016. The parameters corresponding to the calibrated models are given in Table \ref{calibrated_parameters_2016}.

\begin{figure}
\hfill
\subfigure[]{\includegraphics[width=5cm]{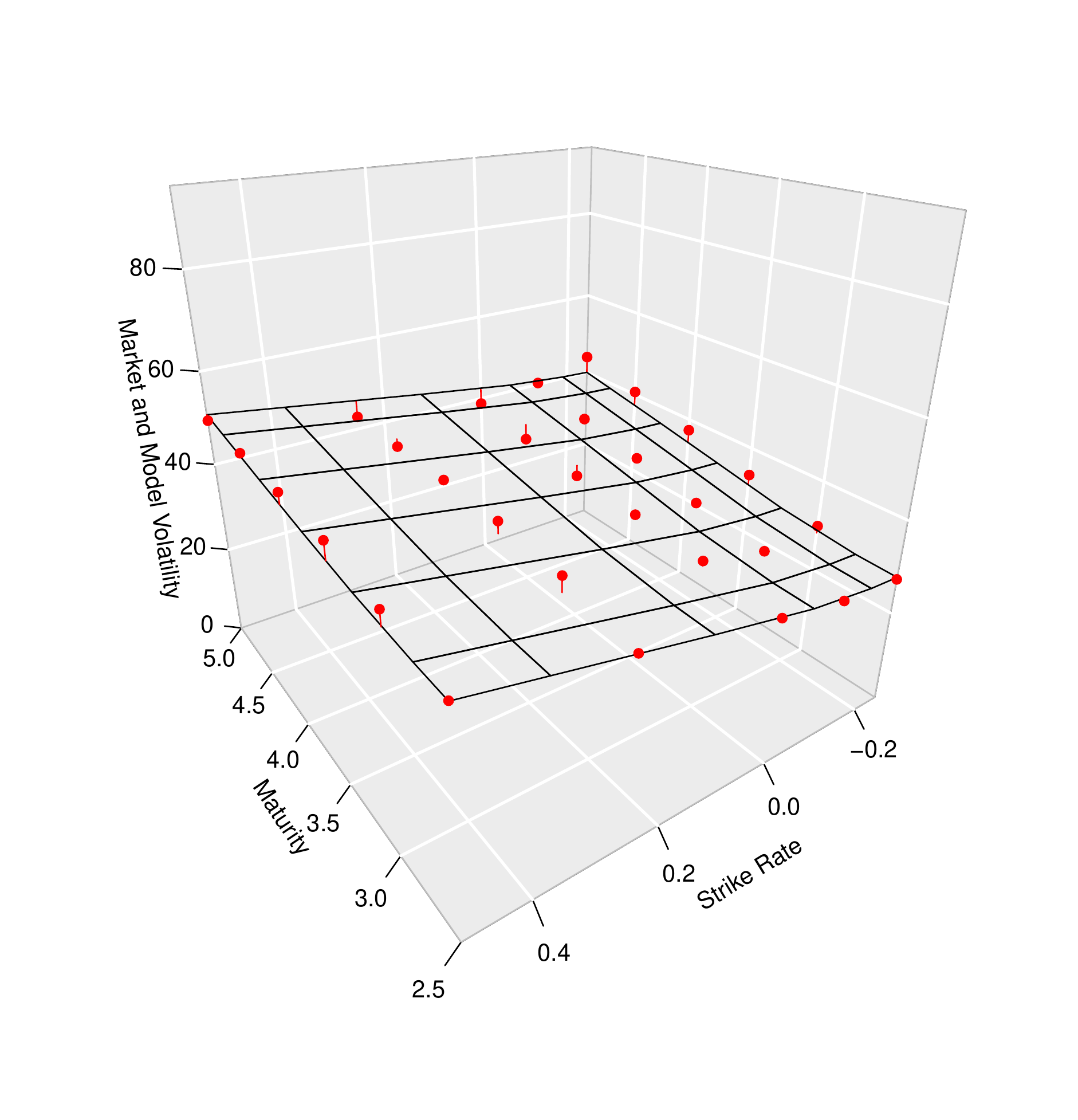}}
\hfill
\subfigure[]{\includegraphics[width=5cm]{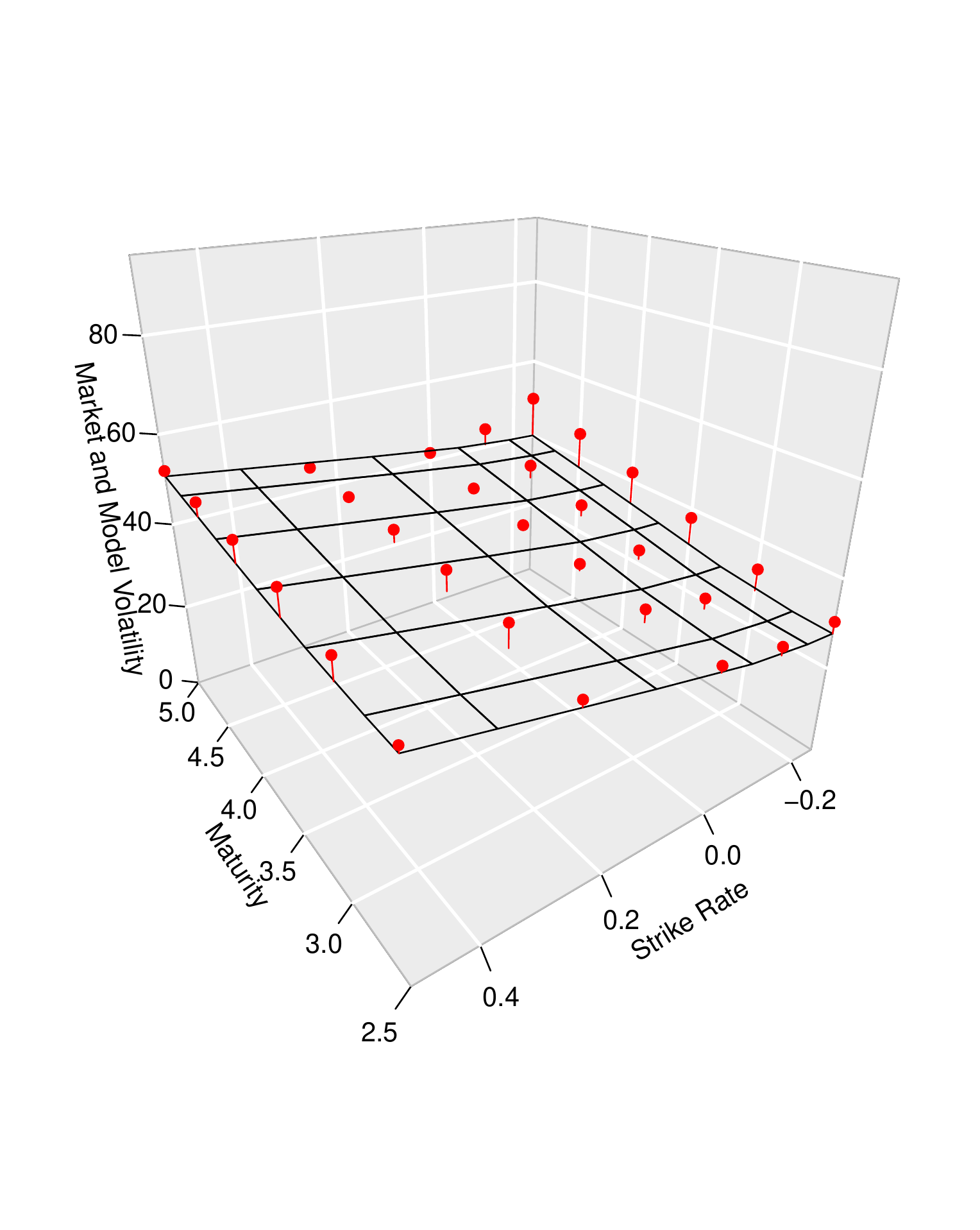}}
\hfill
 \caption{Calibration results of both model variants on September 15, 2016.}
\label{Calibration}
\end{figure}

\begin{table}[htbp]
\centering
\begin{tabular}{@{}lr@{\hspace*{2cm}}lr@{}}
\toprule[1pt]
\multicolumn{4}{c}{\textbf{Calibrated Parameters}}\\
\hline
\hline
\multicolumn{4}{c}{Model Variant (a)}\\
\hline
~\\[-1.3ex]
\multicolumn{2}{@{}l}{\textbf{NIG}}&\multicolumn{2}{@{}l}{\textbf{Volatility Structure}}\\
$\alpha$&53.66666& $a$& $-3.498342$\\
$\beta$&-47.62499&$a_{\mathsf{d}}$& $-0.009348$\\
$\delta$&0.105083&$a_l$& $0.000548$ \\
\multicolumn{4}{c}{Model Variant (b)}\\
\hline
~\\[-1.3ex]
\multicolumn{2}{@{}l}{\textbf{NIG}}&\multicolumn{2}{@{}l}{\textbf{Volatility Structure}}\\
$\alpha$&  2.35391 & $a$& -6.003533 \\
$\beta$&  0.87951&$a_\mathsf{d}$& 0.002264 \\
$\delta$& 14.6241 &$\bar{a}_l$& 0.001549 \\
\bottomrule
\end{tabular}
 \caption{September 15, 2016.}
  \label{calibrated_parameters_2016}
\end{table}

% Literatur 

\newpage

\bibliography{Literatur}

\bibliographystyle{plainnat}

\end{document}